\newtheorem{thm}{Theorem}[section]
\newtheorem{lem}[thm]{Lemma}
\newtheorem{rem}[thm]{Remark}
\numberwithin{equation}{section}
\newcommand{\mbbh}{\mathbb{H}}
\newcommand{\mbbr}{\mathbb{R}}
\newcommand{\al}{\alpha}  \newcommand{\ep}{\epsilon} 
 \newcommand{\del}{\delta}
\newcommand{\p}{\partial}
\newcommand{\argmax}{\mathop{\rm argmax}}
\newcommand{\norm}[1]{\left\lVert#1\right\rVert}
\newcommand{\cpoim}[1]{\tilde{N}(d#1,dy)} 
\newcommand{\poim}[1]{N(d#1,dy)}  
\algnewcommand\INPUT{\item[\textbf{Input:}]}%
\algnewcommand\OUTPUT{\item[\textbf{Output:}]}%
\newcommand\mymapsto{\mathrel{\ooalign{$\rightarrow$\cr%
			\kern-.15ex\raise.275ex\hbox{\scalebox{1}[0.522]{$\mid$}}\cr}}}
\def\nn{\nonumber}
 \def\E{\mathbb{E}}
\def\sumj{\sum_{j=1}^{n}}
\def\tes{\hat{\theta}_{n}}
\title[Quasi-likelihood-based EM algorithm for regime-switching SDE]
{
Quasi-likelihood-based EM algorithm for regime-switching SDE
}
\author{Yuzhong Cheng\textsuperscript{1}\textsuperscript{*}}\thanks{\textsuperscript{*}Corresponding author}
\address{\textsuperscript{1}(Corresponding author) Graduate School of Mathematics, Kyushu University, 744 Motooka Nishi-ku Fukuoka 819-0395, Japan}
\email{cheng.yuzhong.129@s.kyushu-u.ac.jp}
\author{Hiroki Masuda\textsuperscript{2}}
\address{\textsuperscript{2} Graduate School of Mathematical Sciences,\\ University of Tokyo, 3-8-1 Komaba Meguro-ku Tokyo 153-8914, Japan}
\email{hmasuda@ms.u-tokyo.ac.jp}
\date{\today}
\keywords{EM algorithm, Stochastic differential equations, Normal inverse Gaussian distribution, High-frequency sampling, Quasi-likelihood approach.
}
\begin{document}
\setlength{\baselineskip}{4.5mm}

\begin{abstract}
This paper considers estimating the parameters in a regime-switching stochastic differential equation(SDE) driven by Normal Inverse Gaussian(NIG) noise. The model under consideration incorporates a continuous-time finite state Markov chain to capture regime changes, enabling a more realistic representation of evolving market conditions or environmental factors. Although the continuous dynamics are typically observable, the hidden nature of the Markov chain introduces significant complexity, rendering standard likelihood-based methods less effective.  To address these challenges, we propose an estimation algorithm designed for discrete, high-frequency observations, even when the Markov chain is not directly observed. Our approach integrates the Expectation-Maximization (EM) algorithm, which iteratively refines parameter estimates in the presence of latent variables, with a quasi-likelihood method adapted to NIG noise. Notably, this method can simultaneously estimate parameters within both the SDE coefficients and the driving noise. Simulation results are provided to evaluate the performance of the algorithm. These experiments demonstrate that the proposed method provides reasonable estimation under challenging conditions.
\end{abstract}

\maketitle


\section{Introduction}
We consider the problem of parameter estimation for stochastic differential equations (SDEs) with Markovian switching, a class of hybrid systems that integrates continuous dynamics with discrete events. 
More precisely, It is a stochastic process with two components, one component $X$ describes the evolution of the process, and another component $\alpha$ describes the switching between different regimes. 
This framework is widely applicable in modeling complex systems where continuous processes interact with discrete state transitions, such as those encountered in finance, ecology, and engineering.

In this paper, we focus on applying an Expectation-Maximization (EM) algorithm to estimate parameters in Markov switching equations with hidden states. Besides the EM algorithm, another widely used approach to this problem is the Markov Chain Monte Carlo (MCMC) method, with relevant literature found in \cite{hahn2010markov} and \cite{hibbah2020mcmc}.

For the EM approach, a previous study by \cite{ChevallierGoutte2017} introduced a two-step method to estimate parameters in Markov switching stochastic differential equations with Lévy noise. Their approach first employs an EM algorithm designed to estimate parameters in a Markov switching SDE driven by Wiener noise, followed by a second step in which the parameters in the Lévy noise are estimated by fitting the noise distribution to each regime. Their study assumes fixed observation time steps.

In contrast, this paper also considers Markov switching SDEs with Lévy noise but proposes an EM algorithm that can simultaneously estimate parameters in both the drift coefficient and the Lévy noise components. Our method is based on a quasi-likelihood approach for Lévy noise under high-frequency observations, allowing for an adaptive, unfixed time step size.

Let $(\Omega,\mathcal{F},\{\mathcal{F}_t\},\mathbb{P})$ be a complete filtered probability space with the filtration $ \{\mathcal{F}_t\}_{t\ge 0}$ admitting usual conditions (i.e., it is right continuous and $\mathcal{F}_0$ contains all $\mathbb{P}$-null sets). The process $X = (X_t)_{t \geq 0}$ satisfies 
an $\mathbb{R}$-valued stochastic differential equation with switching given by the following expression
\begin{equation}
	dX_t = \lambda(b(\alpha_t)-X_t)dt +  dZ_t,
	\label{switchingsde}
\end{equation} while
$\alpha =(\alpha_t)_{t \geq 0}$ be a homogeneous continuous-time Markov chain with finite state space $\mathcal{S} = \{1,2,..., N\}$ where $N$ is a known positive integer. 
We always assume the Markov chain $\alpha$, the process $Z$ in \eqref{switchingsde} and $X_0$ are mutually independent.
The generator $Q = (q_{ij})$ of this Markov chain $\alpha$ is a $N \times N$ matrix whose elements are characterized by 
\begin{equation}
	\begin{array}{c}
		\mathbb{P}(\alpha_{t+h}=j \mid \alpha_t=i) 
		=\left\{\begin{array}{ll}
			q_{i j} h+o(h), & \text { if } i \neq j, \\
			1+q_{i i} h +o(h), & \text { if } i = j,
		\end{array}\right.
	\end{array}
	\label{CTMCgenerator}
\end{equation}
for $h \downarrow 0$. Here we have $q_{ij} > 0$ for $i\neq j$ and $q_{ii} = - \sum_{i \neq j}q_{ij}$ for all $i$.

The driving process $Z$ is a Normal Inverse Gaussian(NIG) L\'{e}vy process on $\mathbb{R}$ with $\mathcal{L}(Z_1) \sim \operatorname{NIG}(a,0,\delta,0)$. The density function of this NIG distribution is given by
	\begin{equation}
		f(z)= \frac{a \delta}{\pi} \exp \left\{\delta a\right\} \frac{K_{1}\left(a\sqrt{\delta^{2}+z^{2}}\right)}{\sqrt{\delta^{2}+z^{2}}},
		\label{nigdensity}
	\end{equation}
	where $\delta >0$, $a > 0$, $K_1$ is the modified Bessel function of third order with index 1. It is well known that NIG distribution is closed in convolution operation. Therefore we have $Z_t \sim \operatorname{NIG}(a,0,\delta t,0)$. For more properties of NIG distribution, see \cite{barndorff1997normal}. 

In the equation \eqref{switchingsde},  $\lambda \in \mathbb{R}_{+}$, and $b(\alpha_t)_{t \geq 0}$ is a continuous-time Markov chain
with state space $\{b(1),b(2),...,b(N)\}$, $b(i)\neq b(j)$ for $i \neq j$.
The existence and uniqueness of a strong solution of equation \eqref{switchingsde} is guaranteed by Theorem 2.1 in \cite{xi2017feller}.

In our case, $X$ is observed and $\alpha$ is latent (unobserved).
Let $h>0$, $t_j=jh$ for $j=0,1,...,n$ and $T_n = nh$. The observation is the vector $(X_{t_0},...,X_{t_n})$.
We should let $T_n$ be large enough to ensure that the solution of \eqref{switchingsde} switches through all states in the state space $\mathcal{S}$.
We assume the following high-frequency and long-term sampling conditions:
\begin{equation}\nn
\text{$T_n := nh \to \infty$\quad and \quad $h \to 0$\quad as $n \to \infty$.}    
\end{equation}


Let 
\begin{equation*}
    \theta = (b(1),b(2),...,b(N), \lambda, \delta),
\end{equation*} and $\theta \in \Theta$ where $\Theta = \prod_{i=1}^{N}\Theta_{b(i)}\times\Theta_{\lambda}\times\Theta_{\delta}$, and $N \geq 2$ be given.
All $\Theta_{\lambda}\subset \mathbb{R}_{+}, \Theta_{\delta}\subset \mathbb{R}_{+}, \Theta_{b(i)}\subset \mathbb{R}$ for $i =1,...,N$ are compact convex sets. Our objective is to estimate the parameter $\theta$ from $(X_{t_j})_{j=0}^{n}$. We should mention here that the parameter $a$ is a nuisance parameter, which may be unknown.
To simplify the computational complexity of the estimation procedure, we assume that 
$Q$ is known and fixed. The primary focus of this study is on the estimation of $\theta$. When the generator $Q$ is unknown, refer to Remark \ref{rem:unknownQ}.

The letter $C$ denotes a positive constant which may change from location to location.
We use $p_{\theta}$ as a generic symbol for densities and distributions parameterized by $\theta$. We will write $\mathbb{X}_{1,n} := (X_{t_1},...,X_{t_n})$, $\mathbb{A}_{1,n} := (\al_{t_1},...,\al_{t_n})$.

The paper is organized as follows: In Section \ref{sec:QLA}, we introduce the Cauchy quasi-likelihood approach to the NIG distribution. Section \ref{sec:EM} begins with a review of the standard EM algorithm, followed by our main contribution, proposing a numerical method based on an approximated EM algorithm, including methods for calculating the probability values required in the algorithm. In Section \ref{sec:sim}, we present numerical experiments to evaluate the performance of our method. Finally, an Appendix provides relevant background information.

\section{Quasi likelihood approach for NIG L\'{e}vy noise}
\label{sec:QLA}

In this section, we introduce a Cauchy quasi-likelihood function for the equation \eqref{switchingsde}, where the driving noise $Z$ is modeled as a NIG L\'{e}vy process.
Quasi-likelihood functions are well established in the literature and are an effective approach for approximating the true likelihood function. Gaussian quasi-likelihood functions are commonly used for models driven by a Wiener process, while for more general driving noises, such as jump processes, non-Gaussian quasi-likelihood functions may also be employed. This has been demonstrated to yield satisfactory statistical properties, as discussed in \cite{masuda2019non} and \cite{clement2020}.

When $Z$ is a NIG L\'{e}vy process where $Z_1$ follows $\operatorname{NIG}(a,0,\delta,0)$ where the density is given in \eqref{nigdensity}, we have that $Z_h$ and $\Delta_j Z$ follow the same distribution, and thus $Z_h$ follows an $\operatorname{NIG}(a,0,\delta h,0)$ distribution. Therefore, by Euler approximation, we have the approximate expression for $\Delta_jZ$:
\begin{equation*}
	X_{t_{j}} -\left(X_{t_{j-1}} +\lambda(b(\alpha_{t_{j-1}})-X_{t_{j-1}})h\right) \approx \Delta_jZ \sim \operatorname{NIG}(a,0,\delta h,0).
\end{equation*}
Also it is known that for any $c,d \in \mathbb{R}$ 
\begin{equation*}
	Y = cX+d \sim \operatorname{NIG}\left(\frac{a}{|c|},0,c\delta,d\right),
\end{equation*} when $X \sim \operatorname{NIG}(a,0,\delta,0)$, see  \cite{ChevallierGoutte2017}.
From this, we obtain that
\begin{equation*}
	\frac{\Delta_jZ}{\delta h} \sim \operatorname{NIG}(ah|\delta|,0,1,0).
\end{equation*} 
Substituting the approximate expression for $\Delta_jZ$ into the above relation, we have
\begin{equation*}
	\frac{X_{t_j}-\mu_{j-1}(\lambda)}{ \delta h} \approx \frac{\Delta_jZ}{\delta h} \sim \operatorname{NIG}(ah|\delta|,0,1,0),
\end{equation*}where 
\begin{equation*}
    \mu_{j-1}(\lambda) := X_{t_{j-1}} +\lambda(b(\alpha_{t_{j-1}})-X_{t_{j-1}})h.
\end{equation*}
As noted in \cite{kawai2013} this probability density converges to a standard Cauchy distribution as $h \to 0$. Thus, we can use the Cauchy distribution to approximate the NIG distribution for small-time step sizes. We let
$\boldsymbol{\eta}$ be a standard Cauchy random variable independent of $X$ and $\alpha$, then the conditional distribution of $X_{t_j}$ given $(X_{t_{j-1}},\alpha_{t_{j-1}})$ may be approximated by 
\begin{equation*}
	X_{t_j} \approx  \delta h \boldsymbol{\eta}  + \mu_{j-1}(\lambda) =  \delta h \boldsymbol{\eta}  + X_{t_{j-1}} +\lambda(b(\alpha_{t_{j-1}})-X_{t_{j-1}})h,
\end{equation*} when time stepsize $h$ be sufficiently small. We set $f(X_{t_j}|{X}_{t_{j-1}},\alpha_{t_{j-1}}; \theta)$ be a Cauchy density function given by 
\begin{equation}
	f(X_{t_j}|X_{t_{j-1}},\alpha_{t_{j-1}}; \theta)= \frac{1}{ \delta h\pi \left(1+ \left(\frac{X_{t_j}-\mu_{j-1}(\lambda)}{ \delta h}\right)^2\right)},
	\label{cauchytd}
\end{equation}
and now use this $f(X_{t_j}|X_{t_{j-1}},\alpha_{t_{j-1}}; \theta)$ to approximate transition density of $X_t$ from $t_{j-1}$ to $t_{j}$.
Observe that \eqref{nigdensity} contains the Bessel function, which makes it computationally demanding. A notable advantage of using the Cauchy distribution approximation is that it avoids this complexity.

Under the guidance of above ideology, we define Cauchy quasi-likelihood function $\mathbb{Q}_n(\theta)$ which is given by
\begin{equation}
	\mathbb{Q}_n(\theta) = \sumj \log f(X_{t_j}|X_{t_{j-1}},\alpha_{t_{j-1}}; \theta).
	\label{cauchyqlf}
\end{equation}
The Cauchy Quasi Maximum Likelihood Estimator (CQMLE) seeks to maximize the Cauchy quasi-likelihood function, $\mathbb{Q}_n(\theta)$, concerning the parameter $\theta$. However, it is not feasible to directly maximize $\mathbb{Q}_n(\theta)$ as it contains the unobserved process $\alpha$. Thus, function $\mathbb{Q}_n(\theta)$ here is not a proper choice in our situation. It should be noted that if $\alpha$ were observable, $\mathbb{Q}_n(\theta)$ would be a suitable choice for maximization. 
We propose an EM algorithm-based approach utilizing a Cauchy quasi-likelihood approximation in the next section.

\begin{rem}
    Since we use the Euler approximation to construct the quasi-likelihood function, our approach is not limited to the Ornstein-Uhlenbeck-type SDE \eqref{switchingsde}. In general, we can apply it to the broader class of regime-switching SDEs with the following form:
    \begin{equation*}
        dX_t = \mu(X_t,\alpha_t) dt + \sigma(X_t,\alpha_t) dZ_t.
    \end{equation*} See also Remark \ref{rem:cauchy}.
\end{rem}

\section{EM algorithm approach}
\label{sec:EM}

In this section, we consider the EM algorithm approach for equation \eqref{switchingsde}.

\subsection{Standard EM algorithm revisited}
We will use the EM algorithm in the next subsection. The EM algorithm introduced by \cite{dempster1977maximum} is an efficient iterative method to compute the Maximum Likelihood estimator when some variable is unobserved.
Here we introduce the basic background of the EM algorithm. Let $X$ and $Y$ be random variables, and define $p(x,y;\theta)$ as the joint density function of $(X, Y) \in \mathbb{R}^2$, and $p(y|x;\theta)$ as the conditional density function of $Y$ given $X$, where $\theta \in \Theta$, $\Theta$ is a bounded convex domain of parameters. Suppose that the variable $X$ is observed and $Y$ is a latent variable. The EM algorithm is to estimate $\theta$ under the incompletely observed data. The log-likelihood function $l_n(\theta)$ of $n$ i.i.d. observations $\{(X_1,..., X_n\}$ is defined by
\begin{equation*}
	l_n(\theta) = \sumj \log \left( \int p(X_j,y_j;\theta) dy_j \right).
\end{equation*}
The maximum likelihood estimator (MLE) is to find some $\hat{\theta}_n \in \Theta$ which maximizes the function $l_n(\theta)$. The EM algorithm uses an auxiliary function to calculate the MLE through iteration. We introduce a function $Q_n(\theta,\theta')$, which is called finite-sample Q-function, defined by
\begin{equation}
	Q_n(\theta,\theta') = \sumj  \left( \int p(y_j|X_j;\theta') \log p(X_j,y_j;\theta) dy_j \right).
	\label{FSQ}
\end{equation}

The EM algorithm consists of the following two steps. Given an initial value $\theta^{(0)}$
\begin{itemize}
	\item The first E-step is to compute the finite-sample Q-function $Q_n(\theta,\tes^{(m)})$,
	\begin{equation*}
		Q_n(\theta,\tes^{(m)}) = \sumj  \left( \int p(y_j|X_j;\tes^{(m)}) \log p(X_j,y_j;\theta) dy_j \right),
	\end{equation*} for $m = 0,1,...$;
	\item The second M-step is to find 
	\begin{equation*}
		\tes^{(m+1)} \in \argmax_{\theta} Q_n(\theta,\tes^{(m)}).
	\end{equation*}
\end{itemize}
After doing the E-step and M-step, we have the estimate $\tes^{(m+1)}$ through $\tes^{(m)}$. By repeating the E-step and M-step, the estimate $\tes^{(m+1)}$ is expected to converge to the MLE by the following reason.

Observe that, for any $\theta,\theta' \in \Theta$,
\begin{align*}
	l_n(\theta) &= \sumj \log \left( \int p(X_j,y_j;\theta) dy_j \right) \\
	&= \sumj \log \left( \int \frac{p(X_j,y_j;\theta)}{p(y_j|X_j;\theta')}p(y_j|X_j;\theta') dy_j\right)
	\\
	&\geq  \sumj   \int \log \left(\frac{p(X_j,y_j;\theta)}{p(y_j|X_j;\theta')}\right)p(y_j|X_j;\theta') dy_j
	\\
	&= \sumj  \int \log \left(p(X_j,y_j;\theta)\right)p(y_j|X_j;\theta') dy_j - \sumj  \int \log \left(p(y_j|X_j;\theta')\right)p(y_j|X_j;\theta') dy_j
	\\
	&=Q_n(\theta,\theta') - \sumj  \int \log \left(p(y_j|X_j;\theta')\right)p(y_j|X_j;\theta') dy_j,
\end{align*}
where the inequality follows from Jensen's inequality. It is clear to see that when $\theta = \theta'$ the inequality becomes an equality because $\frac{p(X_j,y_j;\theta')}{p(y_j|X_j;\theta')}= p(X_j;\theta')$. Let 
\begin{equation}
    \Delta_n(\theta ):= \sumj  \int \log \left(p(y_j|X_j;\theta)\right)p(y_j|X_j;\theta) dy_j,
    \nn
\end{equation}
so that we have 
\begin{equation}
	l_n(\theta) \geq Q_n(\theta,\theta')  - \Delta_n(\theta'),
	\label{EMineq}
\end{equation} for all $\theta$ and $\theta'$. Our goal here is to find a value $\theta$ that maximizes the log-likelihood function $\theta \mymapsto l_n(\theta)$. Now we have the function $Q_n(\theta,\theta')  - \Delta_n(\theta')$ is bounded from above by the log-likelihood function $l_n(\theta)$ and the relation $l_n(\theta') = Q_n(\theta',\theta')  - \Delta_n(\theta')$.
Hence we know that each $\theta$ maximizing the function $Q_n(\theta,\theta')  - \Delta_n(\theta')$ should increase the log-likelihood function $l_n(\theta)$. Thus, instead of directly maximizing function $l_n(\theta)$, we can try to maximize the function $\theta \mymapsto Q_n(\theta,\theta')  - \Delta_n(\theta')$. This is the basis of the Expectation-Maximization (EM) algorithm. We have 
\begin{align*}
	\tes^{(m+1)} &\in \argmax_{\theta} \left(Q_n(\theta,\tes^{(m)})  - \Delta_n(\tes^{(m)})\right)
	\\
	&\in \argmax_{\theta} Q_n(\theta,\tes^{(m)}),
\end{align*} where we drop the term which contains no $\theta$. Since
\begin{equation*}
Q_n(\tes^{(m+1)},\tes^{(m)})  - \Delta_n(\tes^{(m)}) \geq Q_n(\tes^{(m)},\tes^{(m)})  - \Delta_n(\tes^{(m)}),
\end{equation*} by \eqref{EMineq}, we have
\begin{equation*}
l_n(\tes^{(m+1)}) \geq l_n(\tes^{(m)}).
\end{equation*} This shows that the EM algorithm iteratively increases the log-likelihood function $l_n(\theta)$. 

The main advantage of the Expectation-Maximization (EM) algorithm is that it provides a convenient framework for solving Maximum Likelihood Estimation problems when unobserved or missing data is present. By maximizing the finite-sample Q-function instead of the log-likelihood function directly, the EM algorithm makes it easier to handle situations where some data is not directly available.

\subsection{EM algorithm for our Switching SDE}
Now, we shift our focus back to the switching SDE model. Our approach for estimating the parameter $\theta$ is built upon an EM algorithm tailored for diffusions with Markovian switching, as described in \cite{ChevallierGoutte2017}. However, our method diverges from \cite{ChevallierGoutte2017} in several key aspects. Specifically, we operate in a high-frequency observation setting, where the time step size $h$ is sufficiently small. Instead of relying on the commonly used Gaussian likelihood approximation, we leverage the asymptotic distribution of the Normal Inverse Gaussian (NIG) Lévy process over small time intervals to construct a Cauchy quasi-likelihood function.
As was mentioned, the EM algorithm is particularly well-suited for models involving latent variables, as it offers a computationally feasible approximation to the maximum likelihood estimator. A critical component of this method is the definition of an appropriate target function—in our case, a quasi-likelihood function—to be maximized through iterative steps of the EM algorithm.



Let $\mathbb{X}_{0,n} = (X_{t_0},X_{t_1},...,X_{t_n})$ denote all observations of $X$, $\mathbb{A}_{0,n}=(\al_{t_0},\al_{t_1},...,\al_{t_n})$. Define $p(\mathbb{X}_{0,n},\mathbb{A}_{0,n};\theta)$ as the joint density function of $\mathbb{X}_{0,n}$ and $\mathbb{A}_{0,n}$ under $\theta$ with respect to some $\sigma$-finite measure on $\mathbb{R}^{n+1}\times\mathcal{S}^{n+1}$. Since the continuous-time Markov chain $\alpha$ is not directly observed, we investigate an EM algorithm to estimate the parameters. Following the idea in the standard EM algorithm, we integrate  the joint density function $p(\mathbb{X}_{0,n},\mathbb{A}_{0,n};\theta)$ over the unobserved variables $\mathbb{A}_{0,n}$ to define the log-likelihood function $\mathbb{L}_n(\theta)$ as
\begin{align}
	\mathbb{L}_n(\theta) =& \log \left(\sum_{\mathbb{A}_{0,n} \in \mathcal{S}^n} p(\mathbb{X}_{0,n},\mathbb{A}_{0,n};\theta)\right).
\end{align} By following the standard EM algorithm framework, we derive the following: for any $\theta$ and $\theta'$,

\begin{align*}
	\mathbb{L}_n(\theta) =& \log \left(\sum_{\mathbb{A}_{0,n} \in \mathcal{S}^n} p(\mathbb{X}_{0,n},\mathbb{A}_{0,n};\theta)\right)
	\\
	=& \log \left(\sum_{\mathbb{A}_{0,n} \in \mathcal{S}^n} \frac{p(\mathbb{X}_{0,n},\mathbb{A}_{0,n};\theta)}{p\left(\mathbb{A}_{0,n}| \mathbb{X}_{0,n};\theta'\right)}p\left(\mathbb{A}_{0,n}| \mathbb{X}_{0,n};\theta'\right)\right)
	\\
	\geq& \sum_{\mathbb{A}_{0,n} \in \mathcal{S}^n} \log \left( \frac{p(\mathbb{X}_{0,n},\mathbb{A}_{0,n};\theta)}{p\left(\mathbb{A}_{0,n}| \mathbb{X}_{0,n};\theta'\right)}\right)p\left(\mathbb{A}_{0,n}| \mathbb{X}_{0,n};\theta'\right)
	\\
	=& \sum_{\mathbb{A}_{0,n} \in \mathcal{S}^n} \log \left( p(\mathbb{X}_{0,n},\mathbb{A}_{0,n};\theta)\right)p\left(\mathbb{A}_{0,n}| \mathbb{X}_{0,n};\theta'\right)
	\\
	&- \sum_{\mathbb{A}_{0,n} \in \mathcal{S}^n} \log \left( p\left(\mathbb{A}_{0,n}| \mathbb{X}_{0,n};\theta'\right)\right)p\left(\mathbb{A}_{0,n}| \mathbb{X}_{0,n};\theta'\right).
\end{align*}  

Within the standard EM algorithm framework, the function
\begin{equation*}
H_n(\theta,\theta'):= \sum_{\mathbb{A}_{0,n} \in \mathcal{S}^n} \log \left( p(\mathbb{X}_{0,n},\mathbb{A}_{0,n};\theta)\right)p\left(\mathbb{A}_{0,n}| \mathbb{X}_{0,n};\theta'\right)
\end{equation*}
 is commonly used for iterative optimization. However, in our context, the explicit form of the function $p(\mathbb{X}_{0,n},\mathbb{A}_{0,n};\theta)$ is unknown, making a direct use in the EM algorithm unsuitable. To address this issue, we introduce a quasi-likelihood approximation for $p(\mathbb{X}_{0,n},\mathbb{A}_{0,n};\theta)$ as a feasible alternative. By employing the quasi-likelihood approximation, the function  $H_n(\theta,\theta')$ can be reformulated to accommodate its use within the EM algorithm framework.

We now detail the procedure to modify the function $H_n(\theta,\theta')$. 
First, observe that
\begin{align*}
	p(\mathbb{X}_{0,n},\mathbb{A}_{0,n};\theta) = p(\mathbb{X}_{0,n} | \mathbb{A}_{0,n};\theta) p(\mathbb{A}_{0,n};\theta).
\end{align*} 
Recall that the Markov chain $\alpha$ and the L\'evy process $Z$ are assumed to be independent. Consequently, the process $X$ can be viewed as a solution to an inhomogeneous Markovian L\'evy-driven stochastic differential equation, conditional on $\alpha$. The Euler approximation for $X_{t_j}$ is expressed as
\begin{align*}
	X_{t_{j}} \approx X_{t_{j-1}} +\lambda(b(\alpha_{t_{j-1}})-X_{t_{j-1}})h +  \Delta_j Z.
\end{align*}
This approximation provides an approximated representation for the density of $X_{t_j}$ conditional on $X_{t_{j-1}}$ and $\alpha_{t_{j-1}}$, which can be modeled by the Cauchy density function $f(X_{t_j}|X_{t_{j-1}},\alpha_{t_{j-1}};\theta)$ as in \eqref{cauchytd}. Hence, we approximate $p(\mathbb{X}_{0,n} | \mathbb{A}_{0,n};\theta)$ by the following:
\begin{align}
	p(\mathbb{X}_{0,n} | \mathbb{A}_{0,n};\theta) &= p(X_{t_n}|\mathbb{X}_{0,n-1},\mathbb{A}_{0,n};\theta) p(\mathbb{X}_{0,n-1}|\mathbb{A}_{0,n};\theta)
    \notag\\
    &\approx p(X_{t_n}|X_{t_{n-1}},\alpha_{t_{n-1}};\theta)p(\mathbb{X}_{0,n-1}|\mathbb{A}_{0,n-1};\theta)
	\notag\\
    &\approx f(X_{t_n}|X_{t_{n-1}},\alpha_{t_{n-1}};\theta)p(\mathbb{X}_{0,n-1}|\mathbb{A}_{0,n-1};\theta).\notag
\end{align}
We employ the following two approximation steps: 
\begin{itemize}
	\item In the first approximation, we approximate $p(\mathbb{X}_{0,n} | \mathbb{A}_{0,n};\theta)$ as the product of $p(X_{t_j}|X_{t_{j-1}},\alpha_{t_{j-1}};\theta)$, utilizing the Markov property of $X_t$ conditioned on $\alpha_t$.
	\item In the second approximation, we approximate $p(X_{t_j}|X_{t_{j-1}},\alpha_{t_{j-1}};\theta)$ using the Cauchy density $f(X_{t_j}|X_{t_{j-1}},\alpha_{t_{j-1}};\theta)$ in \eqref{cauchytd}. This approximation is particularly useful in high-frequency scenarios and serves as the foundation for our quasi-likelihood approach to approximating $p(\mathbb{X}_{0,n}|\mathbb{A}_{0,n};\theta)$.
\end{itemize}
This two-step approximation allows us to construct the following quasi-likelihood approach suitable for high-frequency observations.
By induction, we derive:
\begin{equation}
    p(\mathbb{X}_{0,n} | \mathbb{A}_{0,n};\theta) \approx \prod_{j=1}^{n}f(X_{t_j}|X_{t_{j-1}},\alpha_{t_{j-1}};\theta)
	\label{quasitran}.
\end{equation}
Additionally, note that $p(\mathbb{A}_{0,n};\theta)$ can be expressed as the product of transition densities $p(\alpha_{t_j}|\alpha_{t_{j-1}};\theta)$. 
Consequently, we approximate $p(\mathbb{A}_{0,n};\theta)$ as:
\begin{align}
	p(\mathbb{A}_{0,n};\theta) & = \prod_{j=1}^{n} p(\alpha_{t_j}|\alpha_{t_{j-1}};\theta) 
	\notag\\
	& = \prod_{j=1}^{n} \left( I_{\{\alpha_{t_j}=\alpha_{t_{j-1}}\}}\left(1+q_{\alpha_{t_{j-1}}\alpha_{t_{j-1}}}h +o(h)\right) + I_{\{\alpha_{t_j}\neq \alpha_{t_{j-1}}\}}\left(q_{\alpha_{t_{j-1}}\alpha_{t_j}} h+o(h)\right) \right)
	\notag\\
	& \approx  \prod_{j=1}^{n} \left( I_{\{\alpha_{t_j}=\alpha_{t_{j-1}}\}}\left(1+q_{\alpha_{t_{j-1}}\alpha_{t_{j-1}}}h \right) + I_{\{\alpha_{t_j}\neq \alpha_{t_{j-1}}\}}q_{\alpha_{t_{j-1}}\alpha_{t_j}}h  \right),
	\label{approximatealpha}
\end{align}
where $I$ is the indicator function. 
The Cauchy quasi-likelihood approximation of $p(\mathbb{X}_{0,n},\mathbb{A}_{0,n};\theta)$ combines \eqref{quasitran} and \eqref{approximatealpha}, leading to:
\begin{align}
	p&(\mathbb{X}_{0,n},\mathbb{A}_{0,n};\theta) =  p(\mathbb{X}_{0,n} | \mathbb{A}_{0,n};\theta) p(\mathbb{A}_{0,n};\theta)
	\notag\\
	&\approx \prod_{j=1}^{n}
	f(X_{t_j}|X_{t_{j-1}},\alpha_{t_{j-1}};\theta) \left(I_{\{\alpha_{t_j}=\alpha_{t_{j-1}}\}}\left(1+q_{\alpha_{t_{j-1}}\alpha_{t_{j-1}}}h \right) + I_{\{\alpha_{t_j}\neq \alpha_{t_{j-1}}\}}q_{\alpha_{t_{j-1}}\alpha_{t_j}}h\right).
\end{align}
Using this approximation, it is natural to define the modified function $\mbbh_n(\theta,\theta')$ as follows:
\begin{align}
	\mbbh_n(\theta;\theta') = \sumj \sum_{i \in \mathcal{S}} \sum_{k \in \mathcal{S}} \log \biggl\{&f\left(X_{t_j} | X_{t_{j-1}}, \alpha_{t_{j-1}}=i; \theta\right) 
	 \left(I_{\{k=i\}}\left(1+q_{ii}h\right) + I_{\{k\neq i\}}q_{ik}h\right)
	\biggr\}
	\notag\\
	& \cdot \mathbb{P}\left(\alpha_{t_{j-1}}=i,\alpha_{t_j}=k| \mathbb{X}_{0,n};\theta'\right),
	\label{emcauchyqlf}
\end{align} with which we will replace $H_n(\theta,\theta')$ in the EM algorithm.

In $\mbbh_n(\theta;\theta')$, the conditional probabilities of all latent variables $\mathbb{A}_{0,n}$ in $H_n(\theta,\theta')$ are reduced to the conditional probabilities of two consecutive variables $\alpha_{t_{j-1}}$ and $\alpha_{t_j}$. This modification aligns the logarithmic terms in the summation and simplifies the computational process.
The components of $\mbbh_n(\theta;\theta')$, except for $\mathbb{P}\left(\alpha_{t_{j-1}}=i,\alpha_{t_{j}}=k| \mathbb{X}_{0,n};\theta'\right)$, are given explicitly.
This facilitates differentiation with respect to the parameter $\theta$, which is essential for optimization in the EM algorithm.
The term $\mathbb{P}\left(\alpha_{t_{j-1}}=i,\alpha_{t_{j}}=k| \mathbb{X}_{0,n};\theta'\right)$ is computable using numerical methods, which will be discussed in Section \ref{subsec:prob_s}.


\begin{rem}
\label{rem:cauchy}
We should mention it here that in the second step of approximation \eqref{quasitran}, we utilize the Cauchy approximation $f\left(X_{t_j} | X_{t_{j-1}}, \alpha_{t_{j-1}}=i; \theta\right)$ due to the Normal Inverse Gaussian (NIG) distributional property of the driving Lévy noise. In high-frequency settings, the NIG density can be effectively approximated by a Cauchy density.
In general, an appropriate quasi-likelihood should be selected according to the specific distributional characteristics of the driving L\'{e}vy process. As long as the high-frequency sampling $h\to 0$ for infinite-activity non-Gaussian $Z$ is concerned, one natural (and often optimal) choice is the $\al$-stable approximation for $\al\in(0,2)$. We refer interested readers to \cite{masuda2019non}, \cite{Mas23}, \cite{CleGlo20}, and the references therein.
\end{rem}

To formulate the EM algorithm based on the modified function $\mbbh_n(\theta;\theta')$, we proceed as follows. 
\begin{description}
    \item[\textbf{Initialization}]
    Begin by specifying an initial value for the parameter vector:
    \begin{equation*}
        \hat{\theta}^{(0)} = (b(1)^{(0)},b(2)^{(0)},...,b(N)^{(0)},\lambda^{(0)}, \delta^{(0)},Q^{(0)}),
    \end{equation*} where $\hat{\theta}^{(0)}$ serves as the starting point for the iterative procedure.
    \item[\textbf{First E-step}]
    In the first E-step, calculate the function $\mbbh_n(\theta;\theta')$ by substituting the initial parameter values $\hat{\theta}^{(0)}$ into $\theta'$:
    \begin{align*}
 	\mbbh_n(\theta;\hat{\theta}^{(0)}) = \sumj \sum_{i \in \mathcal{S}} \sum_{k \in \mathcal{S}} \log \biggl\{&f\left(X_{t_j} | X_{t_{j-1}}, \alpha_{t_{j-1}}=i; \theta\right) 
 	\left(I_{\{k=i\}}\left(1+q_{ii}h \right) + I_{\{k\neq i\}}q_{ik}h\right)
 	\biggr\}
 	\notag\\
 	& \cdot \mathbb{P}\left(\alpha_{t_{j-1}}=i,\alpha_{t_j}=k| \mathbb{X}_{0,n};\hat{\theta}^{(0)}\right),
    \end{align*}
    \item[\textbf{First M-step}] In the first M-step, maximize $\theta \mymapsto \mbbh_n(\theta;\hat{\theta}^{(0)})$ to obtain an updated parameter estimate:
    \begin{equation*}
	\hat{\theta}^{(1)}  \in \argmax_{\theta} \mbbh_n(\theta;\hat{\theta}^{(0)}).
    \end{equation*}
    \item[\textbf{Iterative Steps}] After obtaining $\hat{\theta}^{(1)}$, set $\theta' = \hat{\theta}^{(1)}$ in equation \eqref{emcauchyqlf}, and repeat the above mentioned E- and M-steps iteratively.
    Suppose we are at the $m+1$-th step:
\end{description}

\begin{description}
	\item[\textbf{E-step}] the finite-sample Cauchy quasi-likelihood function for $m+1$ step is 
	\begin{align}
	\mbbh_n(\theta;\hat{\theta}^{(m)}) 
 = \sumj \sum_{i \in \mathcal{S}} \sum_{k \in \mathcal{S}} \log \biggl\{&f\left(X_{t_j} | X_{t_{j-1}}, \alpha_{t_{j-1}}=i; \theta\right) 
\left(I_{\{k=i\}}\left(1+q_{ii}h \right) + I_{\{k\neq i\}}q_{ik}h\right)
\biggr\}
\notag\\
& \cdot \mathbb{P}\left(\alpha_{t_{j-1}}=i,\alpha_{t_j}=k| \mathbb{X}_{0,n};\hat{\theta}^{(m)}\right).
		\label{estep}
	\end{align}
   \item[\textbf{M-step}] M-step is to compute the maximizer of $\mbbh_n(\theta;\hat{\theta}^{(m)})$. We let
   \begin{equation*}
   	\hat{\theta}^{(m+1)} \in \argmax_{\theta} \mbbh_n(\theta;\hat{\theta}^{(m)}).
   \end{equation*}
\end{description} 

Note that we have 
\begin{align*}
	\\
	&\mbbh_n(\hat{\theta}^{(m+1)};\hat{\theta}^{(m)})  \geq \mbbh_n(\hat{\theta}^{(m)};\hat{\theta}^{(m)}) .
\end{align*}

To compute $\hat{\theta}$ in each step, for example at $m+1$ step, we compute the derivatives of $\mbbh_n(\theta;\hat{\theta}^{(m)})$ with respect to $\theta$. First, we have
\begin{align*}
		\mbbh_n(\theta;\hat{\theta}^{(m)}) = \sumj \sum_{i \in \mathcal{S}}  \sum_{k \in \mathcal{S}} \log &\left( \frac{1}{ \delta h\pi \left(1+ \left(\frac{X_{t_j}-\mu_{j-1}(\lambda)}{ \delta h}\right)^2\right)}\left(I_{\{k=i\}}\left(1+q_{ii}h \right) + I_{\{k\neq i\}}q_{ik}h\right)\right) 
		\\
		&\cdot \mathbb{P}\left(\alpha_{t_{j-1}}=i,\alpha_{t_j}=k| \mathbb{X}_{0,n};\hat{\theta}^{(m)}\right) 
		\\
		= \sumj \sum_{i \in \mathcal{S}} \log &\left( \frac{1+q_{ii}h}{ \delta h\pi \left(1+ \left(\frac{X_{t_j}-\mu_{j-1}(\lambda)}{ \delta h}\right)^2\right)}\right) \mathbb{P}\left(\alpha_{t_{j-1}}=i,\alpha_{t_j}=i| \mathbb{X}_{0,n};\hat{\theta}^{(m)}\right)
		\\
		+ \sumj \sum_{\substack{k\neq i\\k,i \in \mathcal{S}}}& \log \left( \frac{q_{ik} h}{ \delta h\pi \left(1+ \left(\frac{X_{t_j}-\mu_{j-1}(\lambda)}{ \delta h}\right)^2\right)}\right) \mathbb{P}\left(\alpha_{t_{j-1}}=i,\alpha_{t_j}=k| \mathbb{X}_{0,n};\hat{\theta}^{(m)}\right).
\end{align*}
 Here we give expressions of these derivatives. First, we have
\begin{align*}
	\p_{\delta} \mbbh_n(\theta;\hat{\theta}^{(m)}) =  \sumj \sum_{i \in \mathcal{S}} \sum_{k \in \mathcal{S}}&\left(K_{1,j}(\theta)K_{2,j}(\theta)\right) 
\\
&\cdot \mathbb{P}\left(\alpha_{t_{j-1}}=i,\alpha_{t_j}=k| \mathbb{X}_{0,n};\hat{\theta}^{(m)}\right) 
\end{align*}
where
\begin{align*}
	&K_{1,j}(\theta) := \frac{\pi(X_{t_j}-\mu_{j-1}(\lambda))^2}{\delta^2 h } -  h \pi, \\
	&K_{2,j}(\theta) := \frac{1}{ \delta h\pi \left(1+ \left(\frac{X_{t_j}-\mu_{j-1}(\lambda)}{ \delta h}\right)^2\right)}.
\end{align*}
We have
\begin{align*}
\p_{\lambda} \mbbh_n(\theta;\hat{\theta}^{(m)}) = \sumj \sum_{i \in \mathcal{S}} \sum_{k \in \mathcal{S}}&\left(K_{2,j}(\theta)K_{3,j}(\theta) \right)
\\
& \cdot \mathbb{P}\left(\alpha_{t_{j-1}}=i,\alpha_{t_j}=k| \mathbb{X}_{0,n};\hat{\theta}^{(m)}\right) 
\end{align*}
where
\begin{align*}
	K_{3,j}(\theta) := \frac{2\pi(X_{t_j}-\mu_{j-1}(\lambda))(b(i)-X_{t_{j-1}})}{ \delta}.
\end{align*}
For the derivatives with respect to $b(l)$, $l \in \mathcal{S}$, we have
\begin{align*}
\p_{b(l)} \mbbh_n(\theta;\hat{\theta}^{(m)}) = \sumj \sum_{k \in \mathcal{S}} (K_{2,l,j}^{\star}(\theta)K_{4,l,j}(\theta)) \mathbb{P}\left(\alpha_{t_{j-1}}=l,\alpha_{t_j}=k| \mathbb{X}_{0,n};\hat{\theta}^{(m)}\right)
\end{align*}
where
\begin{align*}
	&K_{2,l,j}^{\star}(\theta) := \frac{1}{ \delta h\pi \left(1+ \left(\frac{X_{t_j}-(X_{t_{j-1}} +\lambda(b(l)-X_{t_{j-1}})h)}{ \delta h}\right)^2\right)},
	\\
	&K_{4,l,j}(\theta) := \frac{2\pi(X_{t_j}-(X_{t_{j-1}} +\lambda(b(l)-X_{t_{j-1}})h))\lambda}{ \delta}.
\end{align*}

In many applications of the EM algorithm, it is impossible to explicitly find out the M-step. As can be seen from the above expressions of the first derivative, $\hat{\theta}^{(m+1)}$ can not be computed in a closed form. Thus we have to apply some feasible numerical approaches.
One such approach is solving the M-step by using a one-step estimation method of the Newton-Raphson type, as described in \cite{lange1995gradient}.  In this case the estimator $\hat{\theta}^{(m+1)}$ is given by
\begin{align}
	\hat{\theta}^{(m+1)}_{\star} := \hat{\theta}^{(m)} - \left(\p^2_{\theta}\mbbh_n(\hat{\theta}^{(m)};\hat{\theta}^{(m)})\right)^{-1} \p_{\theta} \mbbh_n(\hat{\theta}^{(m)};\hat{\theta}^{(m)}).
	\label{newton}
\end{align}
This method suggests that a one-step Newton method at each M-step would be sufficient to ensure the convergence of the EM algorithm.
To avoid potential computational issues, we will not make direct use of the Hessian matrix $\p^2_{\theta}\mbbh_n(\hat{\theta}^{(m)};\hat{\theta}^{(m)})$; just for reference, we give its expression in Section \ref{sec:2nd.order.derivatives}.

Another approach is the first-order EM algorithm introduced in \cite{balakrishnan2017statistical}.
Let $\hat{\theta}^{(m+1)}_{\star,\rho}$ be the estimate in $m+1$ step computed by the first-order method:
\begin{align}
\hat{\theta}^{(m+1)}_{\star,\rho} = \hat{\theta}^{(m)} + \rho \p_{\theta}\mbbh_n(\theta = \hat{\theta}^{(m)};\hat{\theta}^{(m)}),
\label{onestep}
\end{align} where $\rho > 0$ is an appropriately chosen value. In the paper \cite{balakrishnan2017statistical}, the value of $\rho$ is chosen such that it satisfies certain conditions that guarantee the theoretical convergence of the EM algorithm. 
Through these fomula we can nurmerically compute $\theta = \hat{\theta}^{(m+1)}$ without solving the equation $\p_{\theta} \mbbh_n(\theta;\hat{\theta}^{(m)})=0$. If we use the first-order method in \eqref{onestep}, when we enter the $(m+2)$th iteration, we let $\hat{\theta}^{(m+1)} = \hat{\theta}^{(m+1)}_{\star,\rho}$, so the new M-step in $(m+1)$th step becomes
\begin{description}
	\item [\textbf{M$^\prime$-step}] Compute $\hat{\theta}^{(m+1)}_{\star,\rho}$ using \eqref{onestep}
	\begin{equation*}
		\hat{\theta}^{(m+1)}_{\star,\rho} = \hat{\theta}^{(m)} + \rho \p_{\theta}\mbbh_n(\theta = \hat{\theta}^{(m)};\hat{\theta}^{(m)}),
	\end{equation*} and then let $\hat{\theta}^{(m+1)} = \hat{\theta}^{(m+1)}_{\star,\rho}$ for the $m+2$ step.
\end{description}

We propose the EM algorithm as in Algorithm \ref{AL:EM}.
\begin{algorithm}
	\caption{EM Algorithm for switching SDE}
	\begin{algorithmic}[1]
		\STATE Choose initial value $\hat{\theta}^{(0)} = (b(1)^{(0)},b(2)^{(0)},...,b(N)^{(0)},\lambda^{(0)}, \delta^{(0)})$, a small positive  $\ep$, and an appropriate positive $\rho$. Let $m=0$.
		\STATE \textbf{E-step:} Compute the function $\mbbh_n(\theta;\hat{\theta}^{(m)})$  where
		\begin{align*}
			\mbbh_n(\theta;\hat{\theta}^{(m)}) 
			= \sumj \sum_{i \in \mathcal{S}} \sum_{k \in \mathcal{S}} \log \biggl\{&f\left(X_{t_j} | X_{t_{j-1}}, \alpha_{t_{j-1}}=i; \theta\right) 
			\left(I_{\{k=i\}}\left(1+q_{ii} h\right) + I_{\{k\neq i\}}q_{ik}h\right)
			\biggr\}
			\notag\\
			& \cdot \mathbb{P}\left(\alpha_{t_{j-1}}=i,\alpha_{t_j}=k| \mathbb{X}_{0,n};\hat{\theta}^{(m)}\right) 
		\end{align*}
		\STATE \textbf{M$^\prime$-step}: Compute the estimate $\hat{\theta}^{(m+1)}_{\star,\rho} = \hat{\theta}^{(m)} $ where
		\begin{equation*}
			\hat{\theta}^{(m+1)}_{\star,\rho} = \hat{\theta}^{(m)} + \rho \p_{\theta}\mbbh_n(\theta= \hat{\theta}^{(m)};\hat{\theta}^{(m)}).
		\end{equation*}
		\STATE Let $\hat{\theta}^{(m+1)} = \hat{\theta}^{(m+1)}_{\star,\rho}$.
		\STATE Check the termination condition. If the difference in absolute value is less than $\ep$, then stop the algorithm and output $\hat{\theta}^{(m+1)}$ as the result. If not, continue to the next step.
		\STATE Let $m = m + 1$ and return to the E-step.
	\end{algorithmic}
\label{AL:EM}
\end{algorithm}
The computation of $\mathbb{P}\left(\alpha_{t_{j-1}}=i,\alpha_{t_j}=k| \mathbb{X}_{0,n};\hat{\theta}^{(m)}\right)$ is presented in Algorithm \ref{AL:prob}.

\begin{rem}
\label{rem:unknownQ}
    In this study, to focus on our primary objective, namely the estimation of $\theta$, we are assuming that the generator $Q$ is known. 
    If computational complexity is not a concern, the EM algorithm could be extended to treat $Q$ as an additional estimation target.
    To incorporate $Q$ into the estimation, the Cauchy quasi-likelihood function \eqref{emcauchyqlf} in the EM algorithm can be formally modified as follows:
\begin{align*}
    \mbbh_n(\theta,Q;\hat{\theta}^{(m)},\hat{Q}^{(m)}) 
 = \sumj \sum_{i \in \mathcal{S}} \sum_{k \in \mathcal{S}} \log \biggl\{&f\left(X_{t_j} | X_{t_{j-1}}, \alpha_{t_{j-1}}=i; \theta\right) 
\left(I_{\{k=i\}}\left(1+q_{ii}h \right) + I_{\{k\neq i\}}q_{ik}h\right)
\biggr\}
\notag\\
& \cdot \mathbb{P}\left(\alpha_{t_{j-1}}=i,\alpha_{t_j}=k| \mathbb{X}_{0,n};\hat{\theta}^{(m)},\hat{Q}^{(m)}\right). \nn
\end{align*}
In addition, the M-step must include an update for $Q$, which can be performed as follows in an analogous way to \eqref{onestep}:
\begin{equation*}
    \hat{Q}^{(m+1)} = \hat{Q}^{(m)} + \rho \p_Q \mbbh_n(\hat{\theta}^{(m)},\hat{Q}^{(m)};\hat{\theta}^{(m)},\hat{Q}^{(m)}).
\end{equation*}
The first-order derivatives of $\mbbh_n(\theta,Q;\hat{\theta}^{(m)},\hat{Q}^{(m)})$ with respect to $q_{ik}$ are given by:
\begin{align*}
	\p_{q_{ll}} \mbbh_n(\theta,Q;\hat{\theta}^{(m)},\hat{Q}^{(m)}) &= \sumj  \frac{h}{1+ q_{ll}h}\mathbb{P}\left(\alpha_{t_{j-1}}=l,\alpha_{t_j}=l| \mathbb{X}_{0,n};\hat{\theta}^{(m)},\hat{Q}^{(m)}\right),
	\\
	\p_{q_{lm}}\mbbh_n(\theta,Q;\hat{\theta}^{(m)},\hat{Q}^{(m)}) &= \sumj  \frac{1}{q_{lm}}\mathbb{P}\left(\alpha_{t_{j-1}}=l,\alpha_{t_j}=m| \mathbb{X}_{0,n};\hat{\theta}^{(m)},\hat{Q}^{(m)}\right), \,\,\, \operatorname{for} l \neq m.
\end{align*}
The computation of the probabilities $\mathbb{P}\left(\alpha_{t_{j-1}}=i,\alpha_{t_j}=k| \mathbb{X}_{0,n};\hat{\theta}^{(m)},\hat{Q}^{(m)}\right)$ is analogous to the method described in Section \ref{subsec:prob_s}, with replace $q_{ik}$ by $\hat{q}_{ik}^{(m)}$.
\end{rem}

\subsection{Caculation of some probabilities}

\label{subsec:prob_s}

The iteration procedure in \eqref{estep} involves probability values $\mathbb{P}\left(\alpha_{t_{j-1}}=i,\alpha_{t_j}=k| \mathbb{X}_{0,n};\hat{\theta}^{(m)}\right) $ for  $i,k \in \mathcal{S}$ and $j = 1,2,...,n$. However, computing these probabilities can be challenging as these values depend on all information $\mathbb{X}_{0,n}$. To address this, we can adopt a method proposed by \cite{kim1994dynamic}, which was originally used for dynamic linear models, and adapt it for our purposes.
In this subsection, we illustrate the method for computing these values.
The calculation of $\mathbb{P}\left(\alpha_{t_{j-1}}=i,\alpha_{t_j}=k| \mathbb{X}_{0,n};\hat{\theta}^{(m)}\right)$ can be divided into three steps:

\begin{description}
    \item[\textbf{Step 1}] Compute the filtered probability  
    $\mathbb{P}\left(\alpha_{t_{j}}=i \mid \mathbb{X}_{0,j};\hat{\theta}^{(m)}\right).$ \item[\textbf{Step 2}] Using the results obtained in Step 1, calculate the  probability  
    $\mathbb{P}\left(\alpha_{t_{j-1}}=i, \alpha_{t_{j}} = k \mid \mathbb{X}_{0,j-1};\hat{\theta}^{(m)}\right)$  
    and the probability  
    $\mathbb{P}\left(\alpha_{t_{j}}=k \mid \mathbb{X}_{0,j-1};\hat{\theta}^{(m)}\right).$ \item[\textbf{Step 3}] Utilize the results from Step 2 to compute the joint probability  
    $\mathbb{P}\left(\alpha_{t_{j}}=i, \alpha_{t_{j+1}}=k \mid \mathbb{X}_{0,n};\hat{\theta}^{(m)}\right).$  
\end{description}

We will describe each part in detail and also note that, in some cases, the explicit computation of these probabilities may be difficult due to numerical issues, so we may need to use approximations instead of explicit values. By following this procedure, we can obtain reliable probability values for the iterative process. Let $p(\cdot)$ be the density function of $\mathbb{P}(\cdot)$ with respect to suitable measure.

\subsubsection{Step 1 : Calculation of \, $ \mathbb{P}\left(\alpha_{t_{j}}=i |\mathbb{X}_{0,j};\hat{\theta}^{(m)}\right)$ }

We have 
\begin{align*}
	\mathbb{P}\left(\alpha_{t_{j}}=i | \mathbb{X}_{0,j};\hat{\theta}^{(m)}\right) 
	&=  \frac{\sum_{k \in \mathcal{S}}p\left(X_{t_j},\alpha_{t_{j}}=i , \alpha_{t_{j-1}}=k  |  \mathbb{X}_{0,j-1}; \hat{\theta}^{(m)}\right)}{p\left(X_{t_j} |  \mathbb{X}_{0,j-1}; \hat{\theta}^{(m)}\right)}.
\end{align*}
Note that we have
\begin{align*}
	p\left(X_{t_j} |  \mathbb{X}_{0,j-1}; \hat{\theta}^{(m)}\right) =  \sum_{l \in \mathcal{S}}\sum_{k \in \mathcal{S}}p\left(X_{t_j},\alpha_{t_{j}}=l, \alpha_{t_{j-1}} = k,  |  \mathbb{X}_{0,j-1}; \hat{\theta}^{(m)}\right),
\end{align*} so it is enough to compute $p\left(X_{t_j},\alpha_{t_{j}}=i, \alpha_{t_{j-1}}=k  |  \mathbb{X}_{0,j-1}; \hat{\theta}^{(m)}\right)$. We propose to use  the following expression: 
\begin{align}
	&p\left(X_{t_j},\alpha_{t_{j}}=i, \alpha_{t_{j-1}}=k  |  \mathbb{X}_{0,j-1}; \hat{\theta}^{(m)}\right)
	\notag\\
	&= p\left(X_{t_j} |\alpha_{t_{j}}=i, \alpha_{t_{j-1}} = k,  \mathbb{X}_{0,j-1}; \hat{\theta}^{(m)}\right) \mathbb{P}\left(\alpha_{t_{j}}=i, \alpha_{t_{j-1}}=k | \mathbb{X}_{0,j-1}; \hat{\theta}^{(m)}\right)
	\notag\\
	&\approx  f(X_{t_j}|X_{t_{j-1}},\alpha_{t_{j-1}}=k; \hat{\theta}^{(m)}) \mathbb{P}\left(\alpha_{t_{j}}=i, \alpha_{t_{j-1}}=k | \mathbb{X}_{0,j-1}; \hat{\theta}^{(m)}\right)
	\notag\\
	& = f(X_{t_j}|X_{t_{j-1}},\alpha_{t_{j-1}}=k; \hat{\theta}^{(m)}) \mathbb{P}\left(\alpha_{t_{j}} = i |  \alpha_{t_{j-1}}=k,\mathbb{X}_{0,j-1}  ;\hat{\theta}^{(m)}\right) \mathbb{P}\left(\alpha_{t_{j-1}}=k|  \mathbb{X}_{0,j-1};\hat{\theta}^{(m)}\right)
	\notag\\
	& = f(X_{t_j}| X_{t_{j-1}},\alpha_{t_{j-1}}=k; \hat{\theta}^{(m)}) 
	\left( I_{\{k=i\}}\left(1+q_{kk} h\right) + I_{\{k\neq i\}}q_{ki}h\right)
	\mathbb{P}\left(\alpha_{t_{j-1}}=k|  \mathbb{X}_{0,j-1};\hat{\theta}^{(m)}\right).
	\label{approxtrans}
\end{align}
Notice that we used the approximation in the third line of \eqref{approxtrans} as follows:
\begin{equation}
    p\left(X_{t_j} |\alpha_{t_{j}}=i,  \alpha_{t_{j-1}} = k,  \mathbb{X}_{0,j-1}; \hat{\theta}^{(m)}\right) \approx f(X_{t_j}|X_{t_{j-1}}, \alpha_{t_{j-1}}=k; \hat{\theta}^{(m)}),
    \label{eq:pdensity}
\end{equation}
and also used the following approximation in the line of \eqref{approxtrans}:
\begin{align}
	\mathbb{P}\left(\alpha_{t_{j}} = k |  \alpha_{t_{j-1}}=i, \mathbb{X}_{0,j-1}  ;\hat{\theta}^{(m)}\right) 
	\approx I_{\{k=i\}}\left(1+q_{ii} h\right) + I_{\{k\neq i\}}q_{ik}h.
	\label{eq:tran,j-1}
\end{align}
We explain the reason why \eqref{eq:pdensity} holds as follows:


\begin{align}
	p\left(X_{t_j}|\alpha_{t_{j}}=i, \alpha_{t_{j-1}}=k,  \mathbb{X}_{0,j-1}; \hat{\theta}^{(m)}\right) &= \frac{p\left(X_{t_j},\alpha_{t_{j}}=i| \alpha_{t_{j-1}}=k,  \mathbb{X}_{0,j-1}; \hat{\theta}^{(m)}\right) }{\mathbb{P}\left(\alpha_{t_{j}}=i| \alpha_{t_{j-1}}=k,  \mathbb{X}_{0,j-1}; \hat{\theta}^{(m)}\right)}
	\notag\\
	&=\frac{p\left(X_{t_j},\alpha_{t_{j}}=i| \alpha_{t_{j-1}}=k,  X_{t_{j-1}}; \hat{\theta}^{(m)}\right) }{\mathbb{P}\left(\alpha_{t_{j}}=i| \alpha_{t_{j-1}}=k,  \mathbb{X}_{0,j-1}; \hat{\theta}^{(m)}\right)}
	\notag\\
	& \approx \frac{f(X_{t_j}|X_{t_{j-1}},\alpha_{t_{j-1}}=k,  \hat{\theta}^{(m)})\mathbb{P}\left(\alpha_{t_{j}}=i| \alpha_{t_{j-1}}=k,  X_{t_{j-1}}; \hat{\theta}^{(m)}\right) }{\mathbb{P}\left(\alpha_{t_{j}}=i| \alpha_{t_{j-1}}=k,  \mathbb{X}_{0,j-1}; \hat{\theta}^{(m)}\right)}
	\notag\\
	&= f(X_{t_j}|X_{t_{j-1}},\alpha_{t_{j-1}}=k; \hat{\theta}^{(m)}).
	\label{approreason2}
\end{align}
 Note that we used the Markov property of the two-component process $(X_t,\alpha_t)$ in the second line of \eqref{approreason2} and that we applied our Cauchy approximation toward the unknown transition probability of $(X_t,\alpha_t)$ in the third line of \eqref{approreason2}.

So we finally obtain a formula to compute $\mathbb{P}\left(\alpha_{t_{j}}=i | \mathbb{X}_{0,j}\hat{\theta}^{(m)}\right)$, that is 
\begin{align*}
	\mathbb{P}\left(\alpha_{t_{j}}=i | \mathbb{X}_{0,j};\hat{\theta}^{(m)}\right)  
	=
	 \frac{\sum_{k \in \mathcal{S}}p\left(X_{t_j},\alpha_{t_{j}}=i , \alpha_{t_{j-1}}=k  |  \mathbb{X}_{0,j-1}; \hat{\theta}^{(m)}\right)}{\sum_{l \in \mathcal{S}}\sum_{k \in \mathcal{S}}p\left(X_{t_j},\alpha_{t_{j}}=l, \alpha_{t_{j-1}} = k,  |  \mathbb{X}_{0,j-1}; \hat{\theta}^{(m)}\right)},
\end{align*}
with the approximation
\begin{align*}
	&p\left(X_{t_j},\alpha_{t_{j}}=i, \alpha_{t_{j-1}}=k  |  \mathbb{X}_{0,j-1}; \hat{\theta}^{(m)}\right) 
	\\
	&\approx  f(X_{t_j}|X_{t_{j-1}},\alpha_{t_{j-1}}=k;  \hat{\theta}^{(m)}) 
	\left( I_{\{k=i\}}\left(1+q_{kk} h\right) + I_{\{k\neq i\}}q_{ki}h\right)
	\mathbb{P}\left(\alpha_{t_{j-1}}=k|  \mathbb{X}_{0,j-1};\hat{\theta}^{(m)}\right).
\end{align*}
Note that the value $\mathbb{P}\left(\alpha_{t_{j}}=i | \mathbb{X}_{0,j};\hat{\theta}^{(m)}\right)$ for each $j$ is  obtained from the value $\mathbb{P}\left(\alpha_{t_{j-1}}=k|  \mathbb{X}_{0,j-1};\hat{\theta}^{(m)}\right)$. Therefore we obtain a forward algorithm to calculate $\mathbb{P}\left(\alpha_{t_{j}}=i |\mathbb{X}_{0,j};\hat{\theta}^{(m)}\right)$ for all $j = 1,...,n$ with given initial values $\mathbb{P}\left(\alpha_{t_{0}}=i |X_{0};\hat{\theta}^{(m)}\right)$ for $i \in \mathcal{S}$.


\subsubsection{Step 2 : Calculation of \, $\mathbb{P}\left(\alpha_{t_{j-1}}=i, \alpha_{t_{j}} = k |  \mathbb{X}_{0,j-1};\hat{\theta}^{(m)}\right)$ \, and \, $\mathbb{P}\left(\alpha_{t_{j}}=k |\mathbb{X}_{0,j-1};\hat{\theta}^{(m)}\right)$}

By straightforward computations and approximation \eqref{eq:tran,j-1}, we have 
\begin{align}
	\mathbb{P}\left(\alpha_{t_{j-1}}=i, \alpha_{t_{j}} = k |  \mathbb{X}_{0,j-1};\hat{\theta}^{(m)}\right) 
	&= 
	\mathbb{P}\left(\alpha_{t_{j}} = k |  \alpha_{t_{j-1}}=i,\mathbb{X}_{0,j-1}  ;\hat{\theta}^{(m)}\right) \mathbb{P}\left(\alpha_{t_{j-1}}=i|  \mathbb{X}_{0,j-1};\hat{\theta}^{(m)}\right)
    \nn \\
    & \approx \left(I_{\{k=i\}}\left(1+q_{ii} h\right) + I_{\{k\neq i\}}q_{ik}h\right) \mathbb{P}\left(\alpha_{t_{j-1}}=i|  \mathbb{X}_{0,j-1};\hat{\theta}^{(m)}\right), \nn
\end{align}
and 
\begin{align*}
	\mathbb{P}\left(\alpha_{t_{j}}=k |\mathbb{X}_{0,j-1};\hat{\theta}^{(m)}\right) &= \sum_{i \in \mathcal{S}}\mathbb{P}\left(\alpha_{t_{j}}=k,\alpha_{t_{j-1}}=i |\mathbb{X}_{0,j-1};\hat{\theta}^{(m)}\right)
	\\
	&\approx \sum_{i \in \mathcal{S}} \left(I_{\{k=i\}}\left(1+q_{ii} h\right) + I_{\{k\neq i\}}q_{ik}h\right) \mathbb{P}\left(\alpha_{t_{j-1}}=i |\mathbb{X}_{0,j-1};\hat{\theta}^{(m)}\right). 
\end{align*}
Note that $ \mathbb{P}\left(\alpha_{t_{j-1}}=i |\mathbb{X}_{0,j-1};\hat{\theta}^{(m)}\right)$ is obtained in Step 1.


\subsubsection{Step 3 : Calculation of \, $ \mathbb{P}\left(\alpha_{t_{j}}=i, \alpha_{t_{j+1}}=k | \mathbb{X}_{0,n};\hat{\theta}^{(m)}\right)$}

We first propose the following approximation:
\begin{align}
	 &\mathbb{P}\left(\alpha_{t_{j}}=i, \alpha_{t_{j+1}}=k | \mathbb{X}_{0,n};\hat{\theta}^{(m)}\right) \nn\\
	&=\mathbb{P}\left(\alpha_{t_{j+1}}=k | \mathbb{X}_{0,n};\hat{\theta}^{(m)}\right) \mathbb{P}\left(\alpha_{t_{j}}=i | \alpha_{t_{j+1}}=k,  \mathbb{X}_{0,n};\hat{\theta}^{(m)}\right) 
	\notag\\
	&\approx  \mathbb{P}\left(\alpha_{t_{j+1}}=k | \mathbb{X}_{0,n};\hat{\theta}^{(m)}\right) \mathbb{P}\left(\alpha_{t_{j}}=i | \alpha_{t_{j+1}}=k,  \mathbb{X}_{0,j};\hat{\theta}^{(m)}\right) 
	\notag\\
	&= \frac{\mathbb{P}\left(\alpha_{t_{j+1}}=k | \mathbb{X}_{0,n};\hat{\theta}^{(m)}\right) \mathbb{P}\left(\alpha_{t_{j}}=i,  \alpha_{t_{j+1}}=k | \mathbb{X}_{0,j};\hat{\theta}^{(m)}\right) }{\mathbb{P}\left(\alpha_{t_{j+1}}=k | \mathbb{X}_{0,j};\hat{\theta}^{(m)}\right)}.
	\label{appronj}
\end{align}
Here, we have used the following approximation in the second line:
\begin{equation}\label{hm:add-1}
    \mathbb{P}\left(\alpha_{t_{j}}=i | \alpha_{t_{j+1}}=k,  \mathbb{X}_{0,n};\hat{\theta}^{(m)}\right) \approx \mathbb{P}\left(\alpha_{t_{j}}=i | \alpha_{t_{j+1}}=k,  \mathbb{X}_{0,j};\hat{\theta}^{(m)}\right).
\end{equation}
Below, we illustrate the reason for the approximation \eqref{hm:add-1}.
Note that 
\begin{align}
	& \mathbb{P}\left(\alpha_{t_{j}}=i | \alpha_{t_{j+1}}=k,  \mathbb{X}_{0,n};\hat{\theta}^{(m)}\right) \nn\\
    &= \mathbb{P}\left(\alpha_{t_{j}}=i | \alpha_{t_{j+1}}=k,  \mathbb{X}_{0,j}, \mathbb{X}_{j+1,n};\hat{\theta}^{(m)}\right)
	\nn\\
	&= \frac{ p\left(\alpha_{t_{j}}=i , \mathbb{X}_{j+1,n}| \alpha_{t_{j+1}}=k,  \mathbb{X}_{0,j};\hat{\theta}^{(m)}\right)}{  p\left(\mathbb{X}_{j+1,n}| \alpha_{t_{j+1}}=k, \mathbb{X}_{0,j}, ;\hat{\theta}^{(m)}\right)}
	\nn\\
	&= \frac{ \mathbb{P}\left(\alpha_{t_{j}}=i | \alpha_{t_{j+1}}=k,  \mathbb{X}_{0,j};\hat{\theta}^{(m)}\right)  p\left( \mathbb{X}_{j+1,n}| \alpha_{t_{j}}=i , \alpha_{t_{j+1}}=k,  \mathbb{X}_{0,j};\hat{\theta}^{(m)}\right)}{  p\left(\mathbb{X}_{j+1,n}| \alpha_{t_{j+1}}=k, \mathbb{X}_{0,j}, ;\hat{\theta}^{(m)}\right)}.
    \label{apprea}
\end{align}
Now if the equation
\begin{align}
	p\left( \mathbb{X}_{j+1,n}| \alpha_{t_{j}}=i , \alpha_{t_{j+1}}=k,  \mathbb{X}_{0,j};\hat{\theta}^{(m)}\right) = p\left(\mathbb{X}_{j+1,n}| \alpha_{t_{j+1}}=k, \mathbb{X}_{0,j}, ;\hat{\theta}^{(m)}\right)
	\label{approiftrue}
\end{align}
holds, then the rightmost side of \eqref{apprea} simplifies to $\mathbb{P}\left(\alpha_{t_{j}}=i | \alpha_{t_{j+1}}=k,  \mathbb{X}_{0,j};\hat{\theta}^{(m)}\right)$, leading to \eqref{hm:add-1} and hence 
the approximation \eqref{appronj}. In this case,
this equation implies that knowing both $\mathbb{X}_{0,j}$ and $\alpha_{t_{j+1}}$, $\alpha_{t_j}$ will not provide any additional information on $ \mathbb{X}_{j+1,n}$ beyond what is already contained in $\mathbb{X}_{0,j}$ and $\alpha_{t_{j+1}}$. However, it should be noted that the argument is an approximate one since \eqref{approiftrue} may not always be true.

The expression \eqref{appronj} give us a backward method to compute $ \mathbb{P}\left(\alpha_{t_{j}}=i, \alpha_{t_{j+1}}=k | \mathbb{X}_{0,n};\hat{\theta}^{(m)}\right) $, knowing $\mathbb{P}\left(\alpha_{t_{n}}=i | \mathbb{X}_{0,n};\hat{\theta}^{(m)}\right)$, $ \mathbb{P}\left(\alpha_{t_{j}}=i,  \alpha_{t_{j+1}}=k | \mathbb{X}_{0,j};\hat{\theta}^{(m)}\right)$ and $\mathbb{P}\left(\alpha_{t_{j+1}}=k | \mathbb{X}_{0,j};\hat{\theta}^{(m)}\right)$ for $i,k \in \mathcal{S}$, $j=0,1,2,...,n-1$. 
We illustrate here the computation steps:
\begin{enumerate}
	\item for $j=n-1$, $i,k \in \mathcal{S}$:
	\begin{align*}
		\mathbb{P}\left(\alpha_{t_{n-1}}=i, \alpha_{t_{n}}=k | \mathbb{X}_{0,n};\hat{\theta}^{(m)}\right) \approx \frac{\mathbb{P}\left(\alpha_{t_{n}}=k | \mathbb{X}_{0,n};\hat{\theta}^{(m)}\right) \mathbb{P}\left(\alpha_{t_{n-1}}=i,  \alpha_{t_{n}}=k | \mathbb{X}_{0,n-1};\hat{\theta}^{(m)}\right) }{\mathbb{P}\left(\alpha_{t_{n}}=k | \mathbb{X}_{0,n-1};\hat{\theta}^{(m)}\right)}.
	\end{align*}
	\item for $j=n-2$, $i,k \in \mathcal{S}$:
	\begin{align*}
		\mathbb{P}\left(\alpha_{t_{n-2}}=i, \alpha_{t_{n-1}}=k | \mathbb{X}_{0,n};\hat{\theta}^{(m)}\right) \approx \frac{\mathbb{P}\left(\alpha_{t_{n-1}}=k | \mathbb{X}_{0,n};\hat{\theta}^{(m)}\right) \mathbb{P}\left(\alpha_{t_{n-2}}=i,  \alpha_{t_{n-1}}=k | \mathbb{X}_{0,n-2};\hat{\theta}^{(m)}\right) }{\mathbb{P}\left(\alpha_{t_{n-1}}=k | \mathbb{X}_{0,n-2};\hat{\theta}^{(m)}\right)},
	\end{align*}
	where 
	\begin{align*}
		\mathbb{P}\left(\alpha_{t_{n-1}}=k | \mathbb{X}_{0,n};\hat{\theta}^{(m)}\right)  = \sum_{l \in \mathcal{S}} \mathbb{P}\left(\alpha_{t_{n-1}}=k, \alpha_{t_{n}}=l | \mathbb{X}_{0,n};\hat{\theta}^{(m)}\right). 
	\end{align*}
	Note that the values $\mathbb{P}\left(\alpha_{t_{n-1}}=k, \alpha_{t_{n}}=l | \mathbb{X}_{0,n};\hat{\theta}^{(m)}\right)$ for all $k,l \in \mathcal{S}$ are provided by the previous step (1).
	\item Repeat this process several times until we obtain the values $\mathbb{P}\left(\alpha_{t_{j}}=i, \alpha_{t_{j+1}}=k | \mathbb{X}_{0,n};\hat{\theta}^{(m)}\right) $ for all $j=0,1,...,n-1$ and $i,k \in \mathcal{S}$.
\end{enumerate}


\medskip

Finally, we summarize the computation method in Algorithm \ref{AL:prob}.
By combining Algorithms \ref{AL:EM} and \ref{AL:prob}, we can now use our method to estimate the parameters.

\begin{algorithm}
	\caption{Computation of $\mathbb{P}\left(\alpha_{t_{j-1}}=i,\alpha_{t_j}=k| \mathbb{X}_{0,n};\hat{\theta}^{(m)}\right) $ for $i,k \in \mathcal{S}$}
	\begin{algorithmic}[1]
		\STATE Choose initial values $\mathbb{P}\left(\alpha_{t_{0}}=i |\mathbb{X}_{0,0};\hat{\theta}^{(m)}\right)$ for all $i \in \mathcal{S}$. Let $j=1$.
		\WHILE {$j \leq n$}
		\STATE Compute $\mathbb{P}\left(\alpha_{t_{j}}=i | \mathbb{X}_{0,j};\hat{\theta}^{(m)}\right) $ for $i \in \mathcal{S}$ as
	\begin{align*}
		\mathbb{P}\left(\alpha_{t_{j}}=i | \mathbb{X}_{0,j};\hat{\theta}^{(m)}\right)  
		=
		\frac{\sum_{k \in \mathcal{S}}p\left(X_{t_j},\alpha_{t_{j}}=i , \alpha_{t_{j-1}}=k  |  \mathbb{X}_{0,j-1}; \hat{\theta}^{(m)}\right)}{\sum_{l \in \mathcal{S}}\sum_{k \in \mathcal{S}}p\left(X_{t_j},\alpha_{t_{j}}=l, \alpha_{t_{j-1}} = k,  |  \mathbb{X}_{0,j-1}; \hat{\theta}^{(m)}\right)},
	\end{align*}
	with
	\begin{align*}
		p&\left(X_{t_j},\alpha_{t_{j}}=i, \alpha_{t_{j-1}}=k  |  \mathbb{X}_{0,j-1}; \hat{\theta}^{(m)}\right) 
		\\
		&\approx  f(X_{t_j}|\alpha_{t_{j-1}}=k, X_{t_{j-1}}, \hat{\theta}^{(m)}) 
		\left( I_{\{k=i\}}\left(1+q_{kk} h\right) + I_{\{k\neq i\}}q_{ki}h\right)
		\mathbb{P}\left(\alpha_{t_{j-1}}=k|  \mathbb{X}_{0,j-1};\hat{\theta}^{(m)}\right).
	\end{align*}
	\STATE $j = j+1$.
	\ENDWHILE
	\STATE Let $j=1$.
			\WHILE {$j \leq n$}
		\STATE Compute $\mathbb{P}\left(\alpha_{t_{j-1}}=i, \alpha_{t_{j}} = k |  \mathbb{X}_{0,j-1};\hat{\theta}^{(m)}\right)$ and $\mathbb{P}\left(\alpha_{t_{j}}=k |\mathbb{X}_{0,j-1};\hat{\theta}^{(m)}\right)$ for $i,k \in \mathcal{S}$ as
	\begin{align*}
		\mathbb{P}\left(\alpha_{t_{j-1}}=i, \alpha_{t_{j}} = k |  \mathbb{X}_{0,j-1};\hat{\theta}^{(m)}\right) 
		= 
		\mathbb{P}\left(\alpha_{t_{j}} = k |  \alpha_{t_{j-1}}=i,\mathbb{X}_{0,j-1}  ;\hat{\theta}^{(m)}\right) \mathbb{P}\left(\alpha_{t_{j-1}}=i|  \mathbb{X}_{0,j-1};\hat{\theta}^{(m)}\right)
	\end{align*}
	and 
	\begin{align*}
		\mathbb{P}\left(\alpha_{t_{j}}=k |\mathbb{X}_{0,j-1};\hat{\theta}^{(m)}\right) &= \sum_{l \in \mathcal{S}}\mathbb{P}\left(\alpha_{t_{j}}=k,\alpha_{t_{j-1}}=l |\mathbb{X}_{0,j-1};\hat{\theta}^{(m)}\right)
		\\
		&= \sum_{l \in \mathcal{S}} \mathbb{P}\left(\alpha_{t_{j}}=k |\alpha_{t_{j-1}}=l, \mathbb{X}_{0,j-1};\hat{\theta}^{(m)}\right) \mathbb{P}\left(\alpha_{t_{j-1}}=l |\mathbb{X}_{0,j-1};\hat{\theta}^{(m)}\right), 
	\end{align*}
    respectively, with
	\begin{align*}
		\mathbb{P}\left(\alpha_{t_{j}} = k |  \alpha_{t_{j-1}}=i, \mathbb{X}_{0,j-1}  ;\hat{\theta}^{(m)}\right) = \mathbb{P}\left(\alpha_{t_{j}} = k |  \alpha_{t_{j-1}}=i  ;\hat{\theta}^{(m)}\right) 
		\approx I_{\{k=i\}}\left(1+q_{ii} h\right) + I_{\{k\neq i\}}q_{ik}h.
	\end{align*}
	\STATE $j=j+1$
	\ENDWHILE
	\STATE Let $j=n$.
	\WHILE {$j \geq 1$}
	\STATE Compute $\mathbb{P}\left(\alpha_{t_{j-1}}=i,\alpha_{t_j}=k| \mathbb{X}_{0,n};\hat{\theta}^{(m)}\right) $ for $i,k \in \mathcal{S}$ as
		\begin{align*}
		\mathbb{P}\left(\alpha_{t_{j-1}}=i,\alpha_{t_j}=k| \mathbb{X}_{0,n};\hat{\theta}^{(m)}\right)  
		\approx
		 \frac{\mathbb{P}\left(\alpha_{t_{j}}=k | \mathbb{X}_{0,n};\hat{\theta}^{(m)}\right) \mathbb{P}\left(\alpha_{t_{j-1}}=i,  \alpha_{t_{j}}=k | \mathbb{X}_{0,j-1};\hat{\theta}^{(m)}\right) }{\mathbb{P}\left(\alpha_{t_{j}}=k | \mathbb{X}_{0,j-1};\hat{\theta}^{(m)}\right)}.
		\end{align*}
		\STATE $j = j-1$.
		\ENDWHILE
	\end{algorithmic}
	\label{AL:prob}
\end{algorithm}




\section{Simulation study}
\label{sec:sim}

In this section, we present the simulation results obtained through the application of our proposed EM algorithm. The Euler approximation method, with a small step size, was utilized to generate sample paths of our switching stochastic differential equation model. The algorithms described in Section \ref{sec:EM} were then applied to estimate the parameters of the model. In the subsequent subsections, we provide a detailed description of the parameter settings, sample generation procedure, choice of initial values, step size, and terminal time in our simulation. The estimated values of the parameters are presented in table form, and visual representations in the form of figures are provided to illustrate the intermediate values of the estimator during the iterative procedure.

In the simulation, we set the state space $\mathcal{S}=\{1,2\}$. This means the equation \eqref{switchingsde} is switching between two states. We consider the following setting of the parameters:
\begin{align*}
	&Q = 
	\begin{pmatrix}
		-0.009& 0.009  \\
		0.005 & -0.005\\
	\end{pmatrix},
\\
&b(1)  = 6, \quad b(2) = 3,
\\
& \lambda = 2, \quad \delta=1, \quad a =0.3.
\end{align*}

Our estimation target is parameter $\theta = (b(1),b(2),\lambda,\delta)$.


\begin{figure}
	\centering
	\begin{subfigure}{.4\textwidth}
		\includegraphics[width=\textwidth]{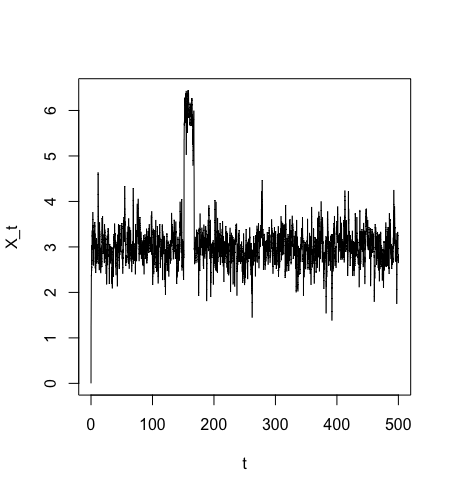}
		\caption{Sample path with $T = 500$, $h=0.1$}
		\label{500_5000}
	\end{subfigure}
	\begin{subfigure}{.4\textwidth}
		\includegraphics[width=\textwidth]{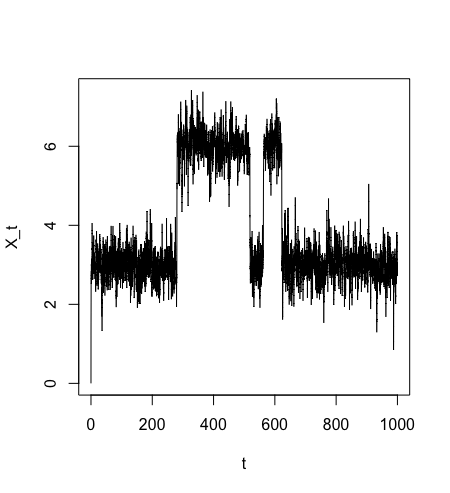}
		\caption{Sample path with $T = 1000$, $h=0.1$}
		\label{1000_10000}
	\end{subfigure}
	\begin{subfigure}{.4\textwidth}
		\includegraphics[width=\textwidth]{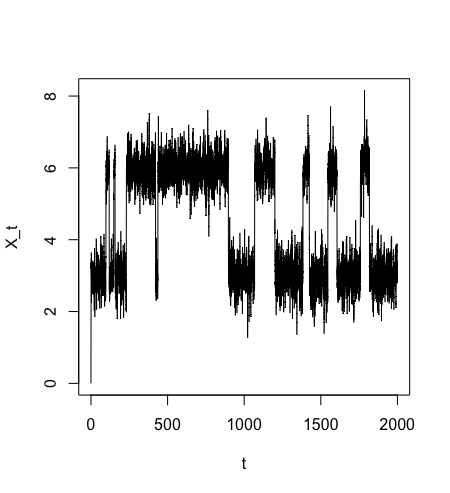}
		\caption{Sample path with $T = 2000$, $h=0.1$}
		\label{2000_20000}
	\end{subfigure}
	\begin{subfigure}{.4\textwidth}
		\includegraphics[width=\textwidth]{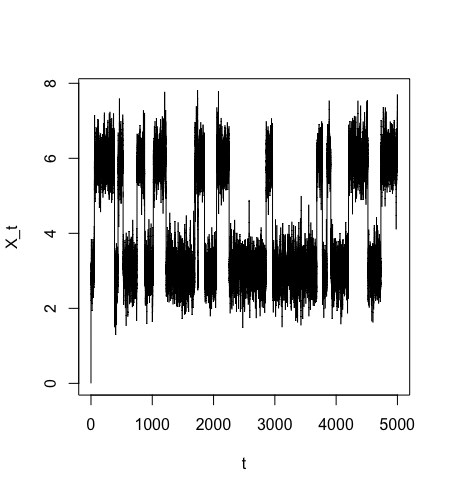}
		\caption{Sample path with $T = 5000$, $h=0.1$}
		\label{5000_50000}
	\end{subfigure}
	\caption{Sample paths}
	\label{sample path}
\end{figure}

\subsection{Sample-path generation}
The sample path of the process $X_t$ in equation \eqref{switchingsde} can be generated by the Euler approximation \eqref{EMapp} in Appendix:
\begin{equation}
	X_{t_{j}} \approx X_{t_{j-1}} +\lambda(b(\alpha_{t_{j-1}})-X_{t_{j-1}})h +  \Delta_j Z,
\end{equation}
where $ \Delta_jZ \sim \operatorname{NIG}(a,0,\delta h,0)$.
There are two points to be mentioned for sample-path generation:
\begin{itemize}
	\item To estimate the parameters in all states, the sample path must range over the entire state space, which requires a sufficiently large terminal time $T$; 
	\item To ensure accuracy, the step size $h$ must be small enough to satisfy the validity condition of the Cauchy approximation in the Euler approximation.
\end{itemize}
Below, we describe the method used to simulate a high-frequency sample $\{X_{t_j}\}_{j=0}^n$ with $t_j = jh$; recall that $h = T/n$ where $T$ denotes the terminal sampling time.
\begin{itemize}
    \item First, we use the R package \texttt{spuRs} to generate Markov chain $\alpha$ (see \cite{spuRs_book} for details) with a smaller step size $\delta = h/10$. 
    \item Using the data $(\alpha_{\del l})_{l=0}^{10 n}$, we then apply the Euler scheme to generate the data for $X$:
\begin{equation*}
    X_{s_l} = X_{s_{l-1}} +\lambda(b(\alpha_{s_{l-1}})-X_{s_{l-1}})\delta +  Z_{s_l} - Z_{s_{l-1}},
\end{equation*}
where $s_l = l\delta$ 
for $l\in \{0,1,...,10n\}$; this internally generates $\{X_{s_l}\}_{l=0}^{10n}$. 
    \item Next, we select a subsequence of the data $\{X_{t_j}\}_{j=0}^{n}$ from $\{X_{t_l}\}_{l=0}^{10n}$ with step size $h$; by thinning the sequence like this, we obtain high-frequency data $\{X_{t_j}\}_{j=0}^{n}$.
\end{itemize}

Figure \ref{sample path} shows the sample paths of $X$ given initial values of $X_{t_0} = X_0 = 0$, $\alpha_{t_0}=\alpha_0=6$ and a step size of $h=0.1$ 
for terminal times $T$ of 500, 1000, 2000, and 5000.

\subsection{Termination conditions}

To determine when to terminate the EM algorithm, specific termination conditions need to be defined. As the convergence speed of the EM algorithm is generally slower than other numerical methods, these conditions should be carefully chosen to ensure the accuracy of the final estimation. The commonly used termination conditions are the difference of estimations, the relative difference of likelihood function values, and the relative difference of estimations. We also need a small positive value $\epsilon$ as an admissible tolerance for convergence. Now suppose we are at $m+1$ step, then the relative difference of likelihood function values is given by
\begin{align}
	 \mathbb{D}^{1}_{m,m+1} := \frac{|\mbbh_n(\hat{\theta}^{(m+1)};\hat{\theta}^{(m)})-\mbbh_n(\hat{\theta}^{(m)};\hat{\theta}^{(m)})|}{|\mbbh_n(\hat{\theta}^{(m)};\hat{\theta}^{(m)})|},
\end{align}
the relative difference of estimations is given by
\begin{align}
	 \mathbb{D}^{2}_{m,m+1} := \frac{|\hat{\theta}^{(m+1)}-\hat{\theta}^{(m)}|}{|\hat{\theta}^{(m)}|},
\end{align} and the difference of estimations is given by
\begin{align}
	 \mathbb{D}^{3}_{m,m+1} := |\hat{\theta}^{(m+1)}-\hat{\theta}^{(m)}|.
\end{align}
 We choose one of these three differences $\mathbb{D}_{m,m+1}$.
 If the difference $\mathbb{D}_{m,m+1}$ is greater than $\epsilon$, we continue to do the iteration step $m+2$. When the difference $\mathbb{D}_{m,m+1}$ is less than $\epsilon$, we stop the iteration and let $\hat{\theta}^{(m+1)}$ be the final estimates of parameter $\theta$.

\subsection{Estimation}

We consider the estimation of parameter $\theta = (b(1),b(2),\lambda,\delta)$. The stopping criterion for the iteration procedure in Algorithm \ref{AL:EM} is defined by a small value of $\epsilon$, such that the iteration stops when the difference $|\hat{\theta}^{(m+1)} - \hat{\theta}^{(m)}|$ is less than $\epsilon$. Additionally, a suitable positive value of $\rho$ must be chosen to govern the iteration process. The quadratic error used in the simulation is defined by $	|\gamma-\hat{\gamma}|^2$ where $\gamma$ is the true parameter value and $\hat{\gamma}$ is the final estimation of the EM algorithm.  We stop at 300 iterations if the difference can not reach $\epsilon$.

To initialize Algorithms \ref{AL:EM} and \ref{AL:prob}, we require initial parameter values $\hat{\theta}^{(0)}$ and the initial state probabilities $\mathbb{P}\left(\alpha_{t_{0}}=i |\mathbb{X}_{0,0};\hat{\theta}^{(0)}\right)$ for all $i \in \mathcal{S}$. In simulations, we set $\mathbb{P}\left(\alpha_{t_{0}}=i |\mathbb{X}_{0,0};\hat{\theta}^{(0)}\right)$ to be equal for each $i$ and choose $\hat{\theta}^{(0)}$ randomly, with $b(1),b(2),\lambda \in [0,10]$, $\delta \in (0,5]$.

Table \ref{t500_0.02} presents the simulation results for $ T = 500 $, $ h = 0.1 $, and $ n = 5000 $, with $ \epsilon $ set to $ 0.02 $ and $ \rho $ set to $ 0.0001 $. The estimated values and quadratic errors are reported. The total time taken by the algorithm, from initialization to final estimation, was 55.64 seconds on an Apple M1 CPU. Figure \ref{f500_0.02} illustrates the evolution of values during the iteration process. A small value of $ \rho $ was chosen to control the convergence of the algorithm and obtain an appropriate estimate.

Table \ref{t1000_0.05} provides simulation results for $ T = 1000 $, $ h = 0.1 $, and $ n = 10000 $, with $ \epsilon $ set to $ 0.05 $ and $ \rho $ set to $ 0.0001 $. The algorithm completed in 36.38 seconds on an Apple M1 CPU. Figure \ref{f1000_0.05} shows the evolution of values in this iteration process.

For $ T = 2000 $, $ h = 0.1 $, and $ n = 20000 $, Table \ref{t2000_0.1} displays the simulation results, with $ \epsilon $ set to $ 0.1 $ and $ \rho $ to $ 0.0001 $. This iteration process took approximately 1.74 minutes on an Apple M1 CPU, as illustrated in Figure \ref{f2000_0.1}.

Finally, Table \ref{t5000_0.03} presents results for $ T = 5000 $, $ h = 0.1 $, and $ n = 50000 $, with $ \epsilon $ set to $ 0.03 $ and $ \rho $ to $ 0.00001 $. The iteration required 2.97 minutes to complete on an Apple M1 CPU. Figure \ref{f5000_0.03} depicts the evolution of values throughout the iteration process.

\begin{rem}
    Selecting an appropriate value for the value  $\rho$  is crucial in the algorithm. A well-chosen $\rho$ can accelerate convergence and reduce the required iteration. 
    An excessively large $\rho$ may lead to divergence or oscillatory behavior, while an overly small $\rho$ can result in slow convergence, potentially causing the algorithm to terminate prematurely before reaching the optimal solution.
\end{rem}

From the simulation results, the following observations can be made:
\begin{itemize}
    \item In Table \ref{t500_0.02}, the parameters $b(1)$ and $b(2)$ associated with the state of the Markov chain are estimated with high accuracy, while the estimators for $\lambda$ and $\delta$ exhibit lower precision. This is further illustrated in Figure \ref{f500_0.02}, where the estimated value of $\delta$ appears lower than its true value. Moreover, the red curve remains flat over a substantial number of iteration steps (up to step 300), indicating that the deviation in estimations exceeds the threshold $\epsilon = 0.02$. Also, the estimate for $\delta$ consistently falls short of the true value over these iterations. 
    \item As demonstrated in Table \ref{t1000_0.05}, the parameters $b(1)$ and $b(2)$ are estimated accurately, whereas $\lambda$ and $\delta$ continue to be comparatively less precise. Figure \ref{f1000_0.05} shows that all parameter estimates ultimately converge to their true values.
    \item Table \ref{t2000_0.1} shows that  $b(1)$ and $b(2)$ maintain high estimation accuracy, while $\lambda$ and $\delta$ remain somewhat less accurate. In Figure \ref{f2000_0.1}, aside from $b(1)$, the other parameter estimates exhibit minor oscillations around their true values, which suggests a slight perturbation in the estimation process.
    \item In Table \ref{t5000_0.03}, while $b(1)$ and $b(2)$  retain their accuracy, the estimates for $\lambda$ and $\delta$ continue to be less precise by comparison. However, Figure \ref{f5000_0.03} indicates that despite the reduced accuracy for $\lambda$ and $\delta$, these estimates demonstrate stable convergence toward their true values, free of significant oscillations.
\end{itemize}
From the above observations, it is evident that our method provides accurate estimates for parameters in the drift coefficient across all time schemes while, for the noise parameter $\delta$, the estimates exhibit relatively lower precision. As $T$ increases, the estimates for each parameter exhibit greater stability, with minimal variation, underscoring the higher reliability of the estimation method over longer time horizons.

\begin{table}[]
	\begin{tabular}{@{} l l l l @{}}
		\toprule
		& True value & Estimation $T=500$ & Quadratic error \\ \midrule
		$b(1)$&  6& 6.0126082 &  0.00015\\ \midrule
		$b(2)$&  3&  2.9044452&0.00913  \\ \midrule
		$\lambda$&  2& 1.8258804 &0.03031  \\ \midrule
		$\delta$&  1&  0.6516502 &0.12134  \\ \bottomrule
	\end{tabular}
\caption{Estimation of $\theta$ for $T = 500$, $h=0.1$ with $\epsilon=0.02$.}
\label{t500_0.02}
\end{table}

\begin{figure}
	\centering
	\includegraphics[scale=0.6]{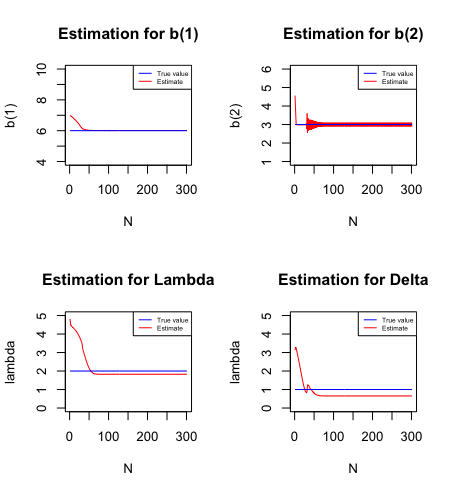}
	\caption{Iteration with $T = 500$, $h=0.1$, $\epsilon=0.02$. The blue line represents the true parameter value, and the red line represents the estimated value.}
	\label{f500_0.02}
\end{figure}

\begin{table}[]
	\begin{tabular}{@{} l l l l @{}}
		\toprule
		& True value & Estimation $T=1000$ & Quadratic error \\ \midrule
		$b(1)$&  6&  6.0130294& 0.00017 \\ \midrule
		$b(2)$&  3& 3.0073506 & 0.00005 \\ \midrule
		$\lambda$&  2& 1.6649264 & 0.11227 \\ \midrule
		$\delta$&  1&0.6390311  & 0.13029 \\ \bottomrule
	\end{tabular}
	\caption{Estimation of $\theta$ for $T = 1000$, $h=0.1$  with $\epsilon=0.05$.}
	\label{t1000_0.05}
\end{table}

\begin{figure}
	\centering
	\includegraphics[scale=0.6]{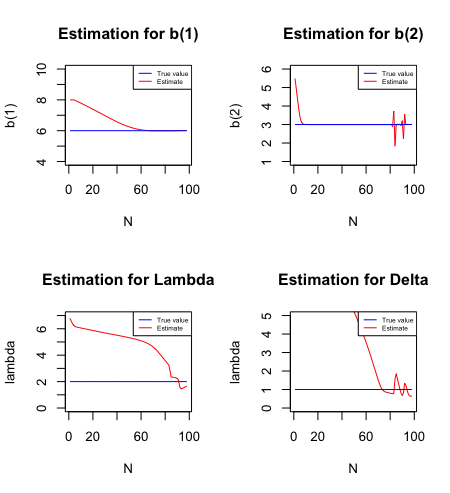}
	\caption{Iteration with $T = 1000$, $h=0.1$, $\epsilon=0.05$. The blue line represents the true parameter value, and the red line represents the estimated value.}
	\label{f1000_0.05}
\end{figure}

\begin{table}[]
	\begin{tabular}{@{} l l l l @{}}
		\toprule
		& True value & Estimation $T=2000$ & Quadratic error \\ \midrule
		$b(1)$&  6&  6.001277& 0.000001 \\ \midrule
		$b(2)$&  3&  2.992654&0.00005  \\ \midrule
		$\lambda$&  2&  1.733450&0.07105  \\ \midrule
		$\delta$&  1&  0.639503&0.12996  \\ \bottomrule
	\end{tabular}
	\caption{Estimation of $\theta$ for $T = 2000$, $h=0.1$  with $\epsilon=0.1$.}
		\label{t2000_0.1}
\end{table}

\begin{figure}
	\centering
	\includegraphics[scale=0.6]{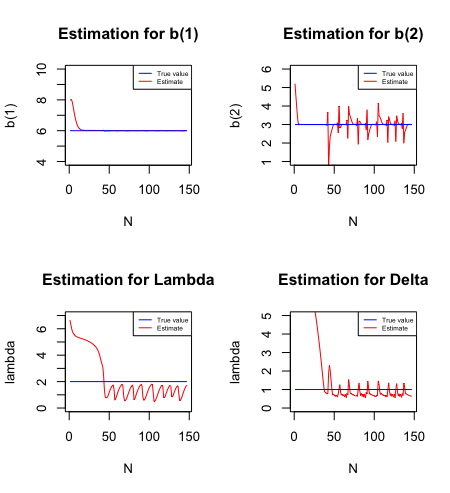}
	\caption{Iteration with $T = 2000$, $h=0.1$, $\epsilon=0.1$. The blue line represents the true parameter value, and the red line represents the estimated value.}
	\label{f2000_0.1}
\end{figure}

\begin{table}[]
	\begin{tabular}{@{} l l l l @{}}
		\toprule
		& True value & Estimation $T=5000$ & Quadratic error \\ \midrule
		$b(1)$&  6& 6.0027876 & 0.000008 \\ \midrule
		$b(2)$&  3& 2.9989096 & 0.000001 \\ \midrule
		$\lambda$&  2& 2.4268411 & 0.18219 \\ \midrule
		$\delta$&  1& 0.6528876 & 0.12049 \\ \bottomrule
	\end{tabular}
	\caption{Estimation of $\theta$ for $T = 5000$, $h=0.1$, $\epsilon=0.03$.}
		\label{t5000_0.03}
\end{table}

\begin{figure}
	\centering
	\includegraphics[scale=0.6]{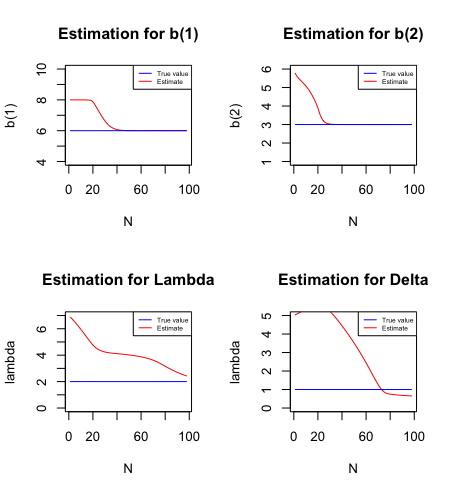}
	\caption{Iteration with $T = 5000$, $h=0.1$, $\epsilon=0.03$. The blue line represents the true parameter value, and the red line represents the estimated value.}
	\label{f5000_0.03}
\end{figure}

\section{Conclusion}

In this paper, we develop a parameter estimation method for regime-switching stochastic differential equations driven by NIG noise. The proposed algorithm is designed for discrete, high-frequency observations and can handle situations where the underlying Markov chain is not directly observed. By combining the EM algorithm with a Cauchy quasi-likelihood approximation, and exploiting the small-time asymptotics of NIG noise under high-frequency sampling, our framework enables the joint estimation of parameters associated with both the SDE coefficients and the noise distribution.
This approach represents an advance over previous methods, such as \cite{ChevallierGoutte2017}, by providing a unified treatment of hidden regimes and non-Gaussian noise in SDEs. We also presented a computational procedure for calculating the necessary conditional probabilities, including those involving the unobserved Markov chain states.

Our simulation results indicate that parameters directly linked to the Markov chain, such as those appearing in the drift and diffusion coefficients for each regime, are estimated with high accuracy. Although the estimation of the NIG noise scale parameter exhibits somewhat lower precision, the overall method still delivers reasonable estimates that improve as the sampling horizon or iteration count increases.
These findings suggest that the EM algorithm, enhanced by a Cauchy quasi-likelihood approximation, is a practical tool for analyzing complex stochastic systems involving regime-switching and non-Gaussian noise. Future research could extend this methodology to more general Lévy-driven SDEs and jointly estimate the transition intensities of the underlying Markov chain.





\subsection*{Code availability}
The code developed for the current study is available from the corresponding author 
on request.

\bigskip

\subsection*{Acknowledgement}
This work was partially supported by WISE program (MEXT) at Kyushu University (YC), and by JST CREST Grant Number JPMJCR2115, Japan, and by JSPS KAKENHI Grant Number 22H01139 (HM).


\section*{Declarations}

\subsection*{Conflict of interest} The authors declare that they have no conflict of interest.

\subsection*{Consent for publication} The authors consent.

\subsection*{Ethical approval} Not applicable.


\section*{Appendix}

\subsection{Second-order derivatives}
\label{sec:2nd.order.derivatives}

When applying \eqref{newton} in the EM algorithm, it becomes necessary to compute the second-order derivatives explicitly. Below, we outline the computation of these second-order derivatives
$\p^2_{\theta}\mbbh_n(\theta;\hat{\theta}^{(m)})$. We have 
\begin{align*}
	\p^2_{\delta} \mbbh_n(\theta;\hat{\theta}^{(m)}) = \sumj \sum_{i \in \mathcal{S}} \sum_{k \in \mathcal{S}}
	&\left(K_{1,j}(\theta)^2K_{2,j}(\theta)^2+K_{2,j}(\theta)K_{5,j}(\theta) \right)
	\\
	&\cdot \mathbb{P}\left(\alpha_{t_{j-1}}=i,\alpha_{t_j}=k| \mathbb{X}_{0,n};\hat{\theta}^{(m)}\right)
\end{align*}
where
\begin{align*}
	K_{5,j}(\theta) = \frac{-2\pi(X_{t_j}-\mu_{j-1}(\lambda))^2}{\delta^3 h}
\end{align*}
and
\begin{align*}
	\p^2_{\lambda} \mbbh_n(\theta;\hat{\theta}^{(m)}) = \sumj \sum_{i \in \mathcal{S}}\sum_{k \in \mathcal{S}}
	&\left(K_{2,j}(\theta)^2K_{3,j}(\theta)^2 +	K_{2,j}(\theta)	K_{6,j}(\theta) \right)
	\\
	& \cdot \mathbb{P}\left(\alpha_{t_{j-1}}=i,\alpha_{t_j}=k| \mathbb{X}_{0,n};\hat{\theta}^{(m)}\right)
\end{align*}
where
\begin{align*}
	K_{6,j}(\theta) = \frac{-2\pi h(b(i)-X_{t_{j-1}})^2}{ \delta}
\end{align*}
and
\begin{align*}
	\p_{\lambda}\p_{\delta} \mbbh_n(\theta;\hat{\theta}^{(m)}) = \sumj \sum_{i \in \mathcal{S}}\sum_{k \in \mathcal{S}}
	&\left(K_{2,j}(\theta)^2K_{3,j}(\theta)K_{1,j}(\theta) +K_{2,j}(\theta)K_{7,j}(\theta)  \right)
	\\
	& \cdot \mathbb{P}\left(\alpha_{t_{j-1}}=i,\alpha_{t_j}=k| \mathbb{X}_{0,n};\hat{\theta}^{(m)}\right)
\end{align*}
where
\begin{align*}
	K_{7,j}(\theta) = \frac{-2\pi(X_{t_j}-\mu_{j-1}(\lambda))(b(i)-X_{t_{j-1}})}{ \delta^2 h}
\end{align*}
and
\begin{align*}
	\p_{b(l)}\p_{\delta} \mbbh_n(\theta;\hat{\theta}^{(m)}) = \sumj \sum_{k \in \mathcal{S}}
	 &\left(K_{2,l,j}^{\star}(\theta)^2K_{1,j}(\theta)K_{4,l,j}(\theta) + K_{2,l,j}^{\star}(\theta)K_{8,j}(\theta)  \right)
	\\
	& \cdot  \mathbb{P}\left(\alpha_{t_{j-1}}=l,\alpha_{t_j}=k| \mathbb{X}_{0,n};\hat{\theta}^{(m)}\right)
\end{align*}
where
\begin{align*}
	K_{8,j}(\theta) := \frac{-2\pi(X_{t_j}-(X_{t_{j-1}} +\lambda(b(l)-X_{t_{j-1}})h))\lambda}{ \delta^2}.
\end{align*}
and
\begin{align*}
	\p_{b(l)}\p_{\lambda} \mbbh_n(\theta;\hat{\theta}^{(m)}) = \sumj \sum_{k \in \mathcal{S}}
	 &\left(K_{2,l,j}^{\star}(\theta)^2K^{\star}_{3,j}(\theta)K_{4,l,j}(\theta) + K_{2,l,j}^{\star}(\theta)K_{9,j}(\theta)  \right)
	\\
	& \cdot  \mathbb{P}\left(\alpha_{t_{j-1}}=l,\alpha_{t_j}=k| \mathbb{X}_{0,n};\hat{\theta}^{(m)}\right)
\end{align*}
where
\begin{align*}
	&K^{\star}_{3,j}(\theta) := \frac{2\pi(X_{t_j}-\mu_{j-1}(\lambda))(b(l)-X_{t_{j-1}})}{ \delta},
	\\
	&K_{9,j}(\theta) := \frac{2\pi(X_{t_j}-(X_{t_{j-1}} +2\lambda(b(l)-X_{t_{j-1}})h))}{ \delta}
\end{align*}
and we have
\begin{align*}
	\p^2_{b(l)} \mbbh_n(\theta;\hat{\theta}^{(m)}) = \sumj \sum_{k \in \mathcal{S}} &\left(K_{2,l,j}^{\star}(\theta)^2K_{4,l,j}(\theta)^2 + K_{2,l,j}^{\star}(\theta)\frac{-2\pi h \lambda^2}{\delta}  \right)
	\\
	& \cdot \mathbb{P}\left(\alpha_{t_{j-1}}=l,\alpha_{t_j}=k| \mathbb{X}_{0,n};\hat{\theta}^{(m)}\right)
\end{align*}
and when $k \neq l$
\begin{align*}
	\p_{b(l)}\p_{b(k)} \mbbh_n(\theta;\hat{\theta}^{(m)}) = 0.
\end{align*}
Also, we have 
\begin{align*}
		\p_{q_{ll}}^2 \mbbh_n(\theta;\hat{\theta}^{(m)}) &= \sumj  -\frac{h^2}{(1+ q_{ll}h)^2}\mathbb{P}\left(\alpha_{t_{j-1}}=l,\alpha_{t_j}=l| \mathbb{X}_{0,n};\hat{\theta}^{(m)}\right),
	\\
	\p_{q_{lm}}^2 \mbbh_n(\theta;\hat{\theta}^{(m)}) &= \sumj  -\frac{1}{q_{lm}^2}\mathbb{P}\left(\alpha_{t_{j-1}}=l,\alpha_{t_j}=m| \mathbb{X}_{0,n};\hat{\theta}^{(m)}\right), \,\,\, \operatorname{for} l \neq m.
\end{align*}
and
\begin{align*}
	&\p_{q_{lm}}\p_{\delta} \mbbh_n(\theta;\hat{\theta}^{(m)}) = 0
	\\
	&\p_{q_{lm}}\p_{\lambda} \mbbh_n(\theta;\hat{\theta}^{(m)}) = 0
	\\
	&\p_{q_{lm}}\p_{b(i)} \mbbh_n(\theta;\hat{\theta}^{(m)}) = 0,
	  \,\,\, \operatorname{for \, all} \, l, \, i, \, m.
\end{align*}

\subsection{Numerical approximation}
\label{sec:A1}

Let $w$ be $r$-dimensional standard Wiener process and $N$ an independent Poisson random measure on $\mbbr^{+}\times (\mbbr^d -{0})$. For each $t >0$ and $A \in \mathcal{B}(\mbbr^d -0)$, the compensator of $N$ is given by $\tilde{N}(t,A) = N(t,A)-t\nu(A)$ where $\nu$ is the L\'{e}vy measure. In this section, we consider the convergence of the Euler approximation of the following switching stochastic differential equation with L\'{e}vy noise
\begin{align}
dX_t =& f(X_t,\alpha_t)dt + \sigma(X_t,\alpha_t)dw_t
\nn \\
&+ \int_{|y|<c}F(X_{t-},y,\alpha_{t-})\cpoim{t} +  \int_{|y|\geq c}G(X_{t-},y,\alpha_{t-})\poim{t},
\label{slevysde}
\end{align} where $c \in (0,\infty)$ and 
\begin{align*}
	\int_{|y|<c}F(X_{t-},y,\alpha_{t-})\cpoim{t} := L^2-\lim_{n \to \infty}\int_{\epsilon_n<|y|<c}F(X_{t-},y,\alpha_{t-})\cpoim{t},
\end{align*}where $\epsilon_n$ decreases monotonically to 0.
This L\'{e}vy-Ito decomposition formulation of the stochastic differential equation driven by L\'{e}vy noise with no switching can be found in \cite{applebaum2009levy}. Here we always assume Markov chain $\alpha_t$ with finite state $\mathcal{S} = \{1,2,...,N\}$ is independent of $w_t$ and $N(t,A)$. 

We define the Euler approximation to the equation \eqref{slevysde}.  Our construction of the Euler approximation is based on Chapter 4 in \cite{mao2006stochastic} for switching diffusion. We set $h>0$ and $t_j = jh$ for $j \in \mathbb{N}$.
Let 
\begin{align*}
	\alpha^h(t) =\alpha^h_j  \quad  \text{for} \quad  t \in [t_j,t_{j+1}), \quad \alpha^h_j =\alpha_{jh}, \quad \alpha^h(0) = \alpha_0.
\end{align*}
Now define $X^h_j$ for $j \geq 0$ by $ X^h_0 = X_0$, and 
\begin{align*}
	X^h_j  = &X^h_{j-1} + f(X^h_{j-1},\alpha^h_{j-1})h + \sigma(X^h_{j-1},\alpha^h_{j-1})\Delta_j w
	\notag\\& +\int_{t_{j-1} }^{t_j}\int_{|y|<c}F(X^h_{j-1},y,\alpha^h_{j-1})\tilde{N}(dt,dy)
	\notag\\ &+ \int_{t_{j-1}}^{t_j} \int_{|y|\geq c}G(X^h_{j-1},y,\alpha^h_{j-1})N(dt,dy),
\end{align*} where $\Delta_j w := w_{t_j}-w_{t_{j-1}}$. Let 
\begin{equation*}
	X^h(t) = X^h_j \,\, \text{for} \,\, t \in [t_j,t_{j+1}).
\end{equation*}
Then we define the Euler approximation to be 
\begin{align}
	\bar{X}_t = X_0 &+ \int_{0}^{t}f(X^h(s),\alpha^h(s))ds + \int_{0}^{t}\sigma(X^h(s),\alpha^h(s))dw_s \notag
	\\&+\int_{0}^{t} \int_{|y|<c}F(X^h(s),y,\alpha^h(s))\cpoim{s}  \notag
	\\&+  \int_{0}^{t}\int_{|y|\geq c}G(X^h(s),y,\alpha^h(s))\poim{s}.
	\label{EMapp}
\end{align}
Note that if $t$ lies in some interval $[t_j,t_{j+1})$, then
\begin{align*}
	\bar{X}_t = X_{jh} &+ \int_{t_j}^{t}f(X_{jh},\alpha_{jh})ds + \int_{t_j}^{t}\sigma(X_{jh},\alpha_{jh})dw_s \notag
	\\&+\int_{t_j}^{t} \int_{|y|<c}F(X_{jh},y,\alpha_{jh})\cpoim{s}  \notag
	\\&+  \int_{t_j}^{t}\int_{|y|\geq c}G(X_{jh},y,\alpha_{jh})\poim{s}.
\end{align*}

Let $a(x,y,i,j):=\sigma(x,i)\sigma(y,j)^T$ and $\norm{a}=\sum_{k=1}|a_{kk}|$.
Now we impose some assumptions to ensure the existence and uniqueness of the solution in \eqref{slevysde} and for further usage.
\begin{enumerate}
	\item[(A1)] Lipschitz condition on $f,a,F,G$. {}
	There exists $C>0$ such that for all $x_1,x_2 \in \mbbr^d, i \in \mathcal{S}$,
	\begin{align*}
		&|f(x_1,i)-f(x_2,i)|^2 + \norm{a(x_1,x_2,i,i)-2a(x_1,x_2,i,i) +a(x_2,x_2,i,i)} 
		\\ &+  \int_{|y|<c}|F(x_1,y,i)-F(x_2,y,i)|^2\nu(dy) + \int_{|y|\geq c} |G(x_1,y,i)-G(x_2,y,i)|^2\nu(dy) \leq C|x_1- x_2|^2
	\end{align*}
   \item[(A2)] Growth condition on $f,a,F,G$.{}
   There exists $C>0$ such that for all $x \in \mbbr^d, i \in \mathcal{S}$,
   \begin{align*}
   	&|f(x,i)|^2 + \norm{a(x,x,i,i)} 
   	\\ &+  \int_{|y|<c}|F(x_y,i)|^2\nu(dy) + \int_{|y|\geq c} |G(x,y,i)|^2\nu(dy) \leq C(1+|x|^2)
   \end{align*}
 \item[(A3)] $x \mapsto G(x,y,i)$ is continuous for all $y \geq c, i \in \mathcal{S}$.
 \item[(A4)] $\E(|X_0|^2) \leq \infty$.
\end{enumerate}
Under assumptions (A1), (A2), and (A3), the uniqueness and existence of the solution to equation \eqref{slevysde} can be shown by a stopping time argument, see Chapter 3 in \cite{mao2006stochastic}.

Before we start proving the convergence of the Euler approximation, we prove the following Lemma should first, which is about the boundedness of the second-order moment for the approximation \eqref{EMapp}.
\begin{lem}
	Under the assumptions (A1),(A2),(A3),(A4), there exist a constant $C$ independent of $h$ such that Euler approximation \eqref{EMapp} satisfies
	\begin{equation}
		\E \left( \displaystyle\sup_{0\leq t \leq T} |\bar{X}_t|^2\right) \leq C.
		\label{LemEM1}
	\end{equation}
\end{lem}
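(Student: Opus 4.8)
The plan is to decompose $\bar X_t$ into the five terms appearing in \eqref{EMapp}---the initial value $X_0$, the drift integral, the diffusion integral, the compensated small-jump integral, and the large-jump integral---to bound the second moment of the supremum of each piece separately, and then to close the estimate by a Gronwall argument. Applying the elementary inequality $|\sum_{i=1}^{5}a_i|^2\le 5\sum_{i=1}^{5}|a_i|^2$ reduces the problem to these five pieces, and $\E|X_0|^2<\infty$ is exactly (A4). A preliminary observation that makes the whole scheme work is that the piecewise-constant interpolant coincides with the approximation at grid points: on $[t_j,t_{j+1})$ one has $X^h(s)=X^h_j=\bar X_{t_j}$, so $|X^h(s)|^2\le \sup_{0\le u\le s}|\bar X_u|^2$ \emph{with no $h$-dependent factor}. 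This is what ultimately keeps the final constant independent of $h$.

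For the drift term I would use the Cauchy--Schwarz inequality in time to get $\sup_{0\le t\le T}|\int_0^t f(X^h(s),\alpha^h(s))\,ds|^2\le T\int_0^T |f(X^h(s),\alpha^h(s))|^2\,ds$, and then bound the integrand by $C(1+|X^h(s)|^2)$ via the growth condition (A2). The diffusion integral and the compensated small-jump integral are square-integrable martingales, so Doob's maximal inequality combined with the It\^o isometry---equivalently, the Burkholder--Davis--Gundy inequality---controls their suprema by $C\int_0^T\E(\text{integrand})\,ds$, again estimated through (A2) and the Lipschitz/growth bounds on $a$ and $F$. In every case the resulting bound takes the form $C_1+C_2\int_0^T \E\big(\sup_{0\le u\le s}|\bar X_u|^2\big)\,ds$ with $h$-independent constants.

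The main obstacle is the large-jump term $\int_0^t\int_{|y|\ge c}G(X^h(s),y,\alpha^h(s))\,N(ds,dy)$, which is \emph{not} a martingale and so cannot be treated directly by BDG. Here the key step is to compensate it, writing it as $\int_0^t\int_{|y|\ge c}G\,\tilde N(ds,dy)+\int_0^t\int_{|y|\ge c}G\,\nu(dy)\,ds$. Because $\nu(\{|y|\ge c\})<\infty$ for a L\'evy measure, the compensated part is a square-integrable martingale handled exactly as above, while the finite-variation compensator term is controlled like the drift, by Cauchy--Schwarz in time together with (A2). This splitting is the one genuinely delicate point of the argument.

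Collecting the five estimates yields $\E\big(\sup_{0\le t\le T}|\bar X_t|^2\big)\le C_1+C_2\int_0^T \E\big(\sup_{0\le u\le s}|\bar X_u|^2\big)\,ds$ with $C_1,C_2$ independent of $h$, and Gronwall's inequality then gives \eqref{LemEM1}. One technical caveat is that Gronwall requires the left-hand side to be finite a priori; I would secure this by first localizing with the stopping times $\tau_R=\inf\{t:|\bar X_t|\ge R\}$, deriving the bound uniformly in $R$ on the truncated horizon $[0,t\wedge\tau_R]$, and then letting $R\to\infty$ by Fatou's lemma to recover the stated estimate.
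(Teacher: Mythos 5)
Your proposal is correct and follows essentially the same route as the paper: the same five-term decomposition of \eqref{EMapp}, the same compensation of the large-jump integral using $\nu(\{|y|\geq c\})<\infty$, martingale maximal inequalities (the paper invokes Kunita's first inequality and BDG, which for second moments amount to your Doob/It\^o-isometry argument) together with the growth condition (A2), and finally Gronwall. Your closing remark on localizing via $\tau_R$ and Fatou to justify the a priori finiteness needed for Gronwall is a point of rigor the paper passes over silently, but it does not change the structure of the argument.
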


\begin{proof} Note that
	\begin{align*}
		|\bar{X}_t|^2 \leq &C \Big(|X_0|^2 + \left|\int_{0}^{t}f(X^h(s),\alpha^h(s))ds\right|^2 + \left|\int_{0}^{t}\sigma(X^h(s),\alpha^h(s))dw_s\right|^2
		\\ &+\left|\int_{0}^{t} \int_{|y|<c}F(X^h(s),y,\alpha^h(s))\cpoim{s}\right|^2
		\\&+\left|\int_{0}^{t}\int_{|y|\geq c}G(X^h(s),y,\alpha^h(s))\poim{s}\right|^2\Big).
	\end{align*}
Through the definition of the compensator $\tilde{N}$,  we have 
\begin{align*}
	\int_{0}^{t}\int_{|y|\geq c}G(X^h(s),y,\alpha^h(s))\poim{s}= &\int_{0}^{t}\int_{|y|\geq c}G(X^h(s),y,\alpha^h(s))\cpoim{s}
	\\&+\int_{0}^{t}\int_{|y|\geq c}G(X^h(s),y,\alpha^h(s))\nu(dy)ds.
\end{align*}
Now, the first term on the right-hand side is a martingale, and the second term on the right-hand side is an integral over a set with finite measure, since $\nu(\{|y|\geq c\}) \leq \infty$; see Section 4.2 in \cite{applebaum2009levy}. Therefore, by Kunita's first inequality (Section 4.4 in \cite{applebaum2009levy}) and Jensen's inequality, we obtain that
\begin{align}
	\E \left( \displaystyle\sup_{0\leq t \leq T} \left|\int_{0}^{t}\int_{|y|\geq c}G(X^h(s),y,\alpha^h(s))\poim{s}\right|^2\right) &\leq C  \E\left(\int_{0}^{T}\int_{|y|\geq c}\left|G(X^h(s),y,\alpha^h(s))\right|^2\nu(dy)ds\right) \notag
	\\&\leq C \int_{0}^{T}\E(1+|X^h(s)|^2)ds,
	\label{poiterm}
\end{align} where the last step is due to the growth condition.
Now a similar procedure can be applied to the compensated Poisson-integral term. By Kunita's inequality and the growth condition, it follows that
\begin{equation}
	\E \left( \displaystyle\sup_{0\leq t \leq T} \left|\int_{0}^{t} \int_{|y|<c}F(X^h(s),y,\alpha^h(s))\cpoim{s}\right|^2\right) \leq C \int_{0}^{T}\E(1+|X^h(s)|^2)ds.
	\label{cpoiterm}
\end{equation}
Also by Jensen's inequality, we have
\begin{equation}
	\E \left( \displaystyle\sup_{0\leq t \leq T} \left|\int_{0}^{t}f(X^h(s),\alpha^h(s))ds\right|^2\right) \leq C \int_{0}^{T}\E(1+|X^h(s)|^2)ds.
	\label{rieterm}
\end{equation}
By Buckholder-Davis-Gundy's inequality, we can obtain a similar result for the Wiener-integral term 
\begin{equation}
		\E \left( \displaystyle\sup_{0\leq t \leq T} \left|\int_{0}^{t}\sigma(X^h(s),\alpha^h(s))dw_s\right|^2\right) \leq C \int_{0}^{T}\E(1+|X^h(s)|^2)ds.
	\label{bmterm}
\end{equation}
Now substituting \eqref{poiterm}, \eqref{cpoiterm}, \eqref{rieterm}, \eqref{bmterm} into the first inequality of $|\bar{X}|^2$ we have
\begin{align*}
	\E \left( \displaystyle\sup_{0\leq t \leq T} |\bar{X}_t|^2\right) &\leq C \int_{0}^{T}\E(1+|X^h(s)|^2)ds
	\\&\leq C + C \int_{0}^{T}\E(\displaystyle\sup_{0\leq t \leq r}|\bar{X}_r|^2)ds.
\end{align*} Then Gronwall's inequality yields the desired result
\begin{equation*}
	\E \left( \displaystyle\sup_{0\leq t \leq T} |\bar{X}_t|^2\right) \leq C.
\end{equation*}
\end{proof}

Having Lemma \ref{LemEM1} in hand, we can now state the main convergence theorem of the Euler approximation
\begin{thm}
	Under the assumptions (A1),(A2),(A3),(A4), there exists a constant $C$ independent of $h$ such that 
	\begin{equation}
		\E \left( \displaystyle\sup_{0\leq t \leq T} |\bar{X}_t-X_t|^2\right) \leq Ch.
		\label{emconv}
	\end{equation} 
\end{thm}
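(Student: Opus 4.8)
The plan is to carry out the classical Euler-scheme stability argument---decompose the error, apply a maximal inequality to each noise component, bound the integrands by the Lipschitz condition (A1) and growth condition (A2), and close with Gronwall---while treating with care the \emph{regime-mismatch} error caused by freezing $\alpha$ at grid points. Throughout write $X^h(s)=\bar X_{t_j}$ for $s\in[t_j,t_{j+1})$, and recall from the preceding lemma that $\E\sup_{t\le T}|\bar X_t|^2\le C$; the analogous bound for the true solution $X$ holds under (A1)--(A4) by a standard stopping-time and Gronwall argument, and both transfer to $X^h$. I would first record two preliminary estimates. A \emph{freezing estimate}: for $s\in[t_j,t_{j+1})$ one has $\E|X^h(s)-\bar X_s|^2=\E|\bar X_{t_j}-\bar X_s|^2\le Ch$, obtained by reapplying, over a single interval of length $h$, exactly the inequalities used in the lemma (Jensen for the drift, Burkholder--Davis--Gundy for the Wiener part, Kunita's first inequality for the compensated small-jump part, and the $\tilde N+\nu\,dt$ splitting followed by Kunita and Jensen for the large-jump part) together with (A2). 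A \emph{jump-probability estimate}: from the generator characterization \eqref{CTMCgenerator}, $\mathbb{P}(\alpha_s\neq\alpha^h(s))\le Ch$ uniformly in $s$, since on $[t_j,t_{j+1})$ a mismatch requires the chain to leave state $\alpha_{t_j}$ within time at most $h$.

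Next I would write $\bar X_t-X_t$ as the sum of the four integral differences (drift, diffusion, small jumps, large jumps), use $|\sum_{i=1}^{4}a_i|^2\le 4\sum_i|a_i|^2$, take $\sup_{0\le t\le T}$ and then $\E$, and apply to each term its maximal inequality precisely as in the lemma. This yields
\[
\E\sup_{0\le t\le T}|\bar X_t-X_t|^2\le C\int_0^T \E\,\Phi(s)\,ds,
\]
where $\Phi(s)$ collects the squared coefficient differences $|f(X^h(s),\alpha^h(s))-f(X_s,\alpha_s)|^2$ and the corresponding $a$-, $F$-, $G$-contributions (the jump ones integrated against $\nu$). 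For each coefficient $g$ I split $g(X^h(s),\alpha^h(s))-g(X_s,\alpha_s)$ into a spatial part $g(X^h(s),\alpha^h(s))-g(X_s,\alpha^h(s))$ and a regime part $g(X_s,\alpha^h(s))-g(X_s,\alpha_s)$. The spatial part is handled by (A1): it is bounded by $C|X^h(s)-X_s|^2\le C\bigl(\E|X^h(s)-\bar X_s|^2+\E\sup_{r\le s}|\bar X_r-X_r|^2\bigr)\le Ch+C\psi(s)$, where $\psi(s):=\E\sup_{r\le s}|\bar X_r-X_r|^2$ and the freezing estimate is used.

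The main obstacle is the regime part, the genuinely switching-specific term. Since $g(X_s,\alpha^h(s))$ and $g(X_s,\alpha_s)$ coincide off the event $\{\alpha^h(s)\neq\alpha_s\}$, the growth condition (A2) gives
\[
\E|g(X_s,\alpha^h(s))-g(X_s,\alpha_s)|^2\le C\,\E\!\left[\mathbf{1}_{\{\alpha^h(s)\neq\alpha_s\}}\,(1+|X_s|^2)\right].
\]
The delicate point is to decouple the $\alpha$-measurable indicator from $1+|X_s|^2$, which itself depends on the $\alpha$-path. I would condition on the whole trajectory of $\alpha$: because (A2) holds uniformly over $\mcs$, the conditional second moment $\E[(1+|X_s|^2)\mid\alpha]$ is bounded by a \emph{deterministic} constant via the same Gronwall estimate as in the lemma, independently of the $\alpha$-path and using independence of $X_0$ from $\alpha$. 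Pulling the indicator out then bounds the regime part by $C\,\mathbb{P}(\alpha^h(s)\neq\alpha_s)\le Ch$ by the jump-probability estimate. Combining the spatial and regime contributions gives $\int_0^T\E\,\Phi(s)\,ds\le Ch+C\int_0^T\psi(s)\,ds$, so $\psi(T)\le Ch+C\int_0^T\psi(s)\,ds$, and Gronwall's inequality yields \eqref{emconv}. The two estimates that most need careful verification are the uniform-in-path conditional moment bound and the $O(h)$ jump probability, since these are exactly what keep the switching from degrading the classical mean-square order.
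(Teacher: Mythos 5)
Your proposal is correct, and its skeleton coincides with the paper's proof: the same error decomposition into drift/diffusion/small-jump/large-jump integrals, the same maximal inequalities (Jensen, Burkholder--Davis--Gundy, Kunita), the same freezing estimate $\E|\bar X_s-X^h(s)|^2\le Ch$, and the same Gronwall closure. The genuine difference lies in how you split the coefficient difference and, consequently, how you decouple the regime-mismatch indicator from the state. The paper splits as $f(X_s,\alpha_s)-f(X^h(s),\alpha_s)$ plus $f(X^h(s),\alpha_s)-f(X^h(s),\alpha^h(s))$, so that the regime part is evaluated at the \emph{frozen} state $X^h(s)=\bar X_{t_j}$, which is $\mathcal{F}_{t_j}$-measurable; the decoupling is then a one-line conditioning on $\mathcal{F}_{t_j}$, pulling out $1+|X^h(s)|^2$ and bounding $\E\bigl(I_{\{\alpha_s\neq\alpha_{jh}\}}\mid\mathcal{F}_{t_j}\bigr)\le Ch$ directly from \eqref{CTMCgenerator}, so that only the lemma's moment bound for $\bar X$ is ever needed. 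Your split puts the regime part at the \emph{true} state $X_s$, which is neither $\mathcal{F}_{t_j}$-measurable nor independent of the chain path, and you compensate by conditioning on the whole trajectory of $\alpha$ and invoking a uniform-in-path conditional moment bound $\E\bigl(1+|X_s|^2\mid\alpha\bigr)\le C$, justified by the independence of $\alpha$ from $(w,N,X_0)$ and the uniformity of (A2) over $\mathcal{S}$. This is a valid and standard technique for regime-switching SDEs, but it is a heavier ingredient: it requires moment control of the true solution $X$ (conditionally on the chain), which the paper's argument avoids entirely. In short, the paper's choice of where to freeze the state in the splitting is precisely what makes the switching term trivial to decouple; your route reaches the same $O(h)$ rate at the cost of one additional (careful but well-known) conditioning argument.
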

\begin{proof}
Recalling the expressions of $X_t$ and $\bar{X}_t$, we can apply Jensen's inequality and Kunita's first inequality to obtain that 
\begin{align}
		\E \left( \displaystyle\sup_{0\leq t \leq T} |\bar{X}_t-X_t|^2\right) \leq&  C \E \left(\int_{0}^{T} |f(X_s,\alpha_s)-f(X^h(s),\alpha^h(s))|^2ds\right)
		\notag \\
		&+ C \E \left(\int_{0}^{T} \norm{\sigma(X_s,\alpha_s)-\sigma(X^h(s),\alpha^h(s))}ds\right) 
		\notag \\
		& + C\E \left(\int_{0}^{T}\int_{|y|<c} |F(X_s,y,\alpha_s)-F(X^h(s),y,\alpha^h(s))|^2\nu(dy)ds\right)
		\notag \\
		& + C\E \left(\int_{0}^{T}\int_{|y|\geq c} |G(X_s,y,\alpha_s)-G(X^h(s),y,\alpha^h(s))|^2\nu(dy)ds\right)
		\label{ThmEuler1}.
\end{align}
Now we consider the first term in the right-hand side of \eqref{ThmEuler1}. Note that 
\begin{equation*}
	f(X_s,\alpha_s)-f(X^h(s),\alpha^h(s)) = f(X_s,\alpha_s)-f(X^h(s),\alpha_s) + f(X^h(s),\alpha_s)-f(X^h(s),\alpha^h(s)).
\end{equation*}
We have 
\begin{align*}
	\E \left(\int_{0}^{T} |f(X_s,\alpha_s)-f(X^h(s),\alpha^h(s))|^2ds\right) \leq& 2 \E \left(\int_{0}^{T} |f(X_s,\alpha_s)-f(X^h(s),\alpha_s)|^2ds\right) \\
	&+ 2\E \left(\int_{0}^{T} |f(X^h(s),\alpha_s)-f(X^h(s),\alpha^h(s))|^2ds\right)
	\\ \leq& C \E \left(\int_{0}^{T} |X_s-X^h(s)|^2ds\right)
	\\ &+  2\E \left(\int_{0}^{T} |f(X^h(s),\alpha_s)-f(X^h(s),\alpha^h(s))|^2ds\right),
\end{align*} where the last step is followed by the Lipschitz condition. Now we have a difference $ |f(X^h(s),\alpha_s)-f(X^h(s),\alpha^h(s))|^2$ which contains two different Markov processes. To estimate this term, we take a partition on an interval $[0,T]$. Let $n= [\frac{T}{h}]$, integer part of $\frac{T}{n}$, and the partition is taken by 
\begin{equation*}
	0=t_0 < t_1 <, ... , < t_n<t_{n+1} =T; \,\, t_j=jh \,\, \text{for} \,\, j= 0,1,...,n. 
\end{equation*}
We can obtain that
\begin{align*}
	\E \left(\int_{0}^{T} |f(X^h(s),\alpha_s)-f(X^h(s),\alpha^h(s))|^2ds\right) &= \sum_{j=0}^{n}\E \left(\int_{t_j}^{t_{j+1}} |f(X^h(s),\alpha_s)-f(X^h(s),\alpha^h(s))|^2ds\right)
	\\ &= \sum_{j=0}^{n}\E \left(\int_{t_j}^{t_{j+1}} |f(X^h(s),\alpha_s)-f(X^h(s),\alpha_{jh})|^2ds\right)
\end{align*} by definition of $\alpha^h(t)$. Now If $\alpha_s=\alpha_{jh}$ for $s \in[t_j,t_{j+1})$, the right-hand side of the above equality will be zero. We consider the case where $\alpha_s \neq \alpha_{jh}$ for some $s \in[t_j,t_{j+1})$. Direct computations yield  
\begin{align*}
	&\E \left(\int_{t_j}^{t_{j+1}} |f(X^h(s),\alpha_s)-f(X^h(s),\alpha_{jh})|^2ds\right)
 \\
 &= \E \left(\int_{t_j}^{t_{j+1}} |f(X^h(s),\alpha_s)-f(X^h(s),\alpha_{jh})|^2 I_{\{\alpha_s=\alpha_{jh}\}} ds\right) 
 \nn\\
    & {}\qquad + \E \left(\int_{t_j}^{t_{j+1}} |f(X^h(s),\alpha_s)-f(X^h(s),\alpha_{jh})|^2 I_{\{\alpha_s \neq \alpha_{jh}\}} ds\right)
	\\ &\leq C\E \left(\int_{t_j}^{t_{j+1}} \left(|f(X^h(s),\alpha_s)|^2+|f(X^h(s),\alpha_{jh})|^2\right) I_{\{\alpha_s \neq \alpha_{jh}\}} ds\right) 
	\\ &\leq C\E \left(\int_{t_j}^{t_{j+1}} \left(1+|X^h(s)|^2\right) I_{\{\alpha_s \neq \alpha_{jh}\}} ds\right).
\end{align*} 
Since $X^h(s)$ is $\mathcal{F}_{t_j}$-measurable, taking the conditional expectation with respect to $\mathcal{F}_{t_j}$ in the last expression above gives
\begin{align*}
	\E \left(\int_{t_j}^{t_{j+1}} \left(1+|X^h(s)|^2\right) I_{\{\alpha_s \neq \alpha_{jh}\}} ds\right) =& \int_{t_j}^{t_{j+1}}\E\left( \E \left(\left(|1+|X^h(s)|^2\right) I_{\{\alpha_s \neq \alpha_{jh}\}}| \mathcal{F}_{t_j}\right) \right)ds
	\\
	=& \int_{t_j}^{t_{j+1}}\E\left( \left(|1+|X^h(s)|^2\right) \E\left( I_{\{\alpha_s \neq \alpha_{jh}\}}| \mathcal{F}_{t_j}\right) \right)ds.
\end{align*}
Then we use transition probability \eqref{CTMCgenerator} to compute the conditional expectation of the indicator function. We have
\begin{align*}
        \E\left( I_{\{\alpha_s \neq \alpha_{jh}\}}| \mathcal{F}_{t_j}\right) =& 
	\E\left( I_{\{\alpha_s \neq \alpha_{jh}\}}| \alpha_{jh}\right) 
    \\
	=& \sum_{i \in \mathcal{S}} \mathbb{P}(\alpha_s \neq i | \alpha_{jh}=i) I_{\{\alpha_{jh}=i\}}
	\\
	=& \sum_{i \in \mathcal{S}} \left( \sum_{i \neq l}\left(q_{il}(s-jh) + o(s-jh)\right) I_{\{\alpha_{jh}=i\}} \right)
	\\
	\leq& C h.
\end{align*}
Applying this inequality with \eqref{LemEM1}, it follows that
\begin{align*}
	\E \left(\int_{t_j}^{t_{j+1}} |f(X^h(s),\alpha_s)-f(X^h(s),\alpha^h(s))|^2ds\right) &\leq\int_{t_j}^{t_{j+1}}\E\left( \left(|1+|X^h(s)|^2\right)(Ch + o(h))\right)ds
	\\&\leq Ch^2,
\end{align*}
and also we have
\begin{equation*}
	\E \left(\int_{0}^{T} |f(X^h(s),\alpha_s)-f(X^h(s),\alpha^h(s))|^2ds\right) \leq Ch.
\end{equation*}
Combining the above estimates we can give an estimate 
\begin{equation}
		\E \left(\int_{0}^{T} |f(X_s,\alpha_s)-f(X^h(s),\alpha^h(s))|^2ds\right) \leq C \E \left(\int_{0}^{T} |X_s-X^h(s)|^2ds\right) + Ch.
\end{equation}
Now we have successfully estimated the first term in \eqref{ThmEuler1}. Other terms in \eqref{ThmEuler1} can be estimated due to a similar argument. We skip detailed computations and just put the results here:
\begin{align}
	&\E \left(\int_{0}^{T} \norm{\sigma(X_s,\alpha_s)-\sigma(X^h(s),\alpha^h(s))}ds\right) \leq C \E \left(\int_{0}^{T} |X_s-X^h(s)|^2ds\right) + Ch,
	\\ &\E \left(\int_{0}^{T}\int_{|y|<c} |F(X_s,y,\alpha_s)-F(X^h(s),y,\alpha^h(s))|^2\nu(dy)ds\right) 
 \\  &{}\qquad \leq C \E \left(\int_{0}^{T} |X_s-X^h(s)|^2ds\right) + Ch,
	\\ &\E \left(\int_{0}^{T}\int_{|y|\geq c} |G(X_s,y,\alpha_s)-G(X^h(s),y,\alpha^h(s))|^2\nu(dy)ds\right) 
 \\
 &{}\qquad \leq C \E \left(\int_{0}^{T} |X_s-X^h(s)|^2ds\right) + Ch.
\end{align}
Substituting above estimates into \eqref{ThmEuler1} we obtain
\begin{equation}
	\E \left( \displaystyle\sup_{0\leq t \leq T} |\bar{X}_t-X_t|^2\right) \leq C \E \left(\int_{0}^{T} |X_s-X^h(s)|^2ds\right) + Ch.
	\label{thmEuler2}
\end{equation}
Note that 
\begin{align}
	\E \left(\int_{0}^{T} |X_s-X^h(s)|^2ds\right)  \leq 2\E \left(\int_{0}^{T} |X_s-\bar{X}_s|^2ds\right) + 2 \E \left(\int_{0}^{T} |\bar{X}_s-X^h(s)|^2ds\right) 
	\label{thmEuler3}
\end{align}
and if $s \in [t_j,t_{j+1})$, by definition of the Euler approximation, Jensen's inequality, and Doob' martingale inequality, we have
\begin{align*}
	\E |\bar{X}_s-X^h(s)|^2 \leq& C \E \left((s-t_j)\int_{t_j}^{s}|f(X_{jh},\alpha_{jh})|^2du\right)
+C\E \left(\int_{t_j}^{s}\norm{\sigma(X_{jh},\alpha_{jh})}du\right)
	\\ &+ C\E \left(\int_{t_j}^{s} \int_{|y|<c}|F(X_{jh},y,\alpha_{jh})|^2\nu(dy)du\right)
	\\ &+ C\E \left(\int_{t_j}^{s} \int_{|y|\geq c}|G(X_{jh},y,\alpha_{jh})|^2\nu(dy)du\right).
\end{align*}
Then the growth condition and \eqref{LemEM1} finally yield
\begin{equation*}
		\E |\bar{X}_s-X^h(s)|^2 \leq C \E \left((1+|X_{jh}|^2)(h^2+h)\right) \leq Ch^2+ Ch \leq Ch.
\end{equation*}
Therefore we obtain that
 \begin{equation*}
 	\E \left(\int_{0}^{T} |\bar{X}_s-X^h(s)|^2ds\right) \leq Ch.
 \end{equation*}
Substituting this and \eqref{thmEuler3} into \eqref{thmEuler2} we have 
\begin{align*}
	\E \left( \displaystyle\sup_{0\leq t \leq T} |\bar{X}_t-X_t|^2\right) \leq& C\E \left(\int_{0}^{T} |X_s-\bar{X}_s|^2ds\right) + Ch
	\\
	\leq& C\int_{0}^{T} \E \left(\displaystyle\sup_{0\leq t \leq r} |X_t-\bar{X}_t|^2\right)dr + Ch.
\end{align*}
Now Grownwall's inequality yields the desired result
\begin{equation*}
		\E \left( \displaystyle\sup_{0\leq t \leq T} |\bar{X}_t-X_t|^2\right) \leq Ch.
\end{equation*}

\end{proof}


\subsection{Markov property}

In this section, we establish that the two-component process $(X_t,\alpha_t)$, which is comprised of the strong solution of equation $\eqref{switchingsde}$ and the continuous-time Markov chain, possesses the Markov property. We prove a general result on the Markov property, which includes equation $\eqref{switchingsde}$ as a special case. The proof is based on the idea behind the proof of the Markov property for L\'{e}vy-driven SDEs without Markov switching, as detailed in \cite{applebaum2009levy}. By applying the Markov property, we can then derive various properties and behaviors of the two-component process, which can be useful in further applications and analyses.

The Markov property of the two-component process depends on the existence and uniqueness of a strong solution to equation \eqref{slevysde}. This is ensured by Theorem 2.1 in \cite{xi2017feller}, assuming that conditions (A1), (A2), and (A3) in Section \ref{sec:A1} hold. The strong solution $X_t$ is unique under these assumptions, as demonstrated in \cite{mao2006stochastic} and \cite{xi2017feller}.
To prove the Markov property, we will first state a theorem, followed by the necessary preliminary background required for the proof. Finally, we will present a detailed proof of the theorem.

\begin{thm}
Assuming that the equation \eqref{slevysde} with initial value $X_0 = x \in \mathbb{R}^d$ and $\alpha_0 = i \in \mathcal{S}$ a.s.  admits a unique solution. Then, the two-component process $(X_t,\alpha_t)$, which is formed by the strong solution $X_t$ of equation \eqref{slevysde} together with its associated Markov chain $\alpha_t$, is a Markov process.
\label{markovproperty}
\end{thm}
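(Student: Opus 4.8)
The plan is to verify the defining relation of the Markov property directly: for every bounded measurable $g:\mathbb{R}^d\times\mathcal{S}\to\mathbb{R}$ and all $0\le s\le t$,
\[
\mathbb{E}\left[g(X_t,\alpha_t)\mid\mathcal{F}_s\right]=\mathbb{E}\left[g(X_t,\alpha_t)\mid X_s,\alpha_s\right].
\]
The backbone of the argument is the stochastic-flow representation of the strong solution together with the independence of the driving ingredients. First I would freeze a \cadlag\ trajectory of the chain $\alpha$; conditional on it, equation \eqref{slevysde} becomes a L\'evy-driven SDE with (deterministic) time-inhomogeneous coefficients $f(\cdot,\alpha_u)$, $\sigma(\cdot,\alpha_u)$, $F(\cdot,\cdot,\alpha_u)$, $G(\cdot,\cdot,\alpha_u)$. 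Denote by $X^{\alpha}_{s,x}(t)$ the solution of this SDE started from the deterministic point $x$ at time $s$. Since a unique strong solution is assumed to exist, uniqueness yields the flow (cocycle) identity $X^{\alpha}_{s,X_s}(t)=X_t$ for $t\ge s$, where on the right $X$ denotes the solution started at time $0$.

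Next I would establish the crucial measurability/independence step: for deterministic $x$ and fixed $s$, the map $t\mapsto X^{\alpha}_{s,x}(t)$ is a measurable functional of $x$, of the post-$s$ noise increments $(w_u-w_s)_{u\ge s}$ and the Poisson random measure $N$ restricted to $(s,\infty)\times(\mathbb{R}^d\setminus\{0\})$, and of the restriction $(\alpha_u)_{s\le u\le t}$. This is obtained by the Picard iteration used to construct the solution: the zeroth iterate is the constant $x$, and each successive iterate is, by construction, such a functional, a property preserved under the $L^2$-limit that defines the solution (the moment bounds of Lemma \ref{LemEM1} guarantee the limit stays within this class). Because the L\'evy noise has independent increments and $\alpha$ is independent of $(w,N)$, the post-$s$ noise is independent of $\mathcal{F}_s$.

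With this representation in hand I would compute the conditional expectation by disintegration. Writing $X_t=X^{\alpha}_{s,X_s}(t)$ and noting that $X_s$ is $\mathcal{F}_s$-measurable while $\alpha_t$ and $X^{\alpha}_{s,x}(t)$ depend on $\mathcal{F}_s$ only through the post-$s$ chain path, I would use two facts: first, the post-$s$ noise is independent of $\mathcal{F}_s$, so it may be integrated out as a fresh copy; second, the Markov property of $\alpha$ itself, by which the conditional law of $(\alpha_u)_{u\ge s}$ given $\mathcal{F}_s$ depends only on $\alpha_s$. Combining these, $\mathbb{E}[g(X_t,\alpha_t)\mid\mathcal{F}_s]$ equals a deterministic function evaluated at $(X_s,\alpha_s)$, namely $\mathbb{E}[g(X^{\alpha'}_{s,x}(t),\alpha'_t)]$ with $x=X_s$ and $\alpha'$ a chain started afresh at $\alpha_s$, which is exactly $\sigma(X_s,\alpha_s)$-measurable.

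The main obstacle is the second step: rigorously showing that the flow $X^{\alpha}_{s,x}(t)$ is, jointly in all its arguments, a measurable functional of only the post-$s$ noise and the post-$s$ chain trajectory, and in particular independent of $\mathcal{F}_s$ for deterministic $x$. The delicate point is the simultaneous handling of the random chain path and the noise: one must carry the measurability through the Picard scheme and justify interchanging the conditioning on the chain with the limiting procedure, which is precisely where the independence of $\alpha$ from $(w,N)$ and the $L^2$-bounds are essential.
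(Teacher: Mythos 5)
Your proposal is correct in outline, but it follows a genuinely different route from the paper. The paper never conditions on the chain trajectory: instead it invokes the classical representation of the continuous-time Markov chain as the solution of the jump SDE $d\alpha_t=\int_{\mathbb{R}}h(\alpha_{t-},z)\,\Pi(dt,dz)$ driven by a Poisson random measure $\Pi$ independent of $(w,N)$. This turns the pair $(X_t,\alpha_t)$ into a two-component solution of a single system of SDEs driven by three independent L\'evy noises, and the Markov property then follows from one application of the standard argument for L\'evy-driven SDEs: the solution $(\Phi_{s,t}(y,j),\Lambda_{s,t}(j))$ started at time $s$ from a deterministic point is measurable with respect to the $\sigma$-field $\mathcal{G}_s$ generated by the post-$s$ increments of $(w,N,\Pi)$, hence independent of $\mathcal{F}_s$; pathwise uniqueness gives the flow identity; and the freezing lemma (Lemma \ref{mclem}) finishes the proof. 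Your argument instead keeps $\alpha$ as an abstract chain, freezes its path to get a time-inhomogeneous L\'evy-driven SDE, and then disintegrates, combining independence of the post-$s$ noise from $\mathcal{F}_s\vee\sigma(\alpha)$ with the Markov property of $\alpha$ in the enlarged filtration. This is sound — the product factorization of the conditional law of (post-$s$ noise, post-$s$ chain path) given $\mathcal{F}_s$ that you sketch is exactly what is needed — but it shifts the technical burden onto two points the paper sidesteps: (i) joint measurability of the solution as a functional of the frozen \cadlag\ chain path, which you would need to carry through the Picard scheme (note that Lemma \ref{LemEM1} concerns the Euler approximation, not Picard iterates, so it is not the right reference for this closure property), and (ii) the fact that $\alpha$ remains Markov with respect to the joint filtration $\mathcal{F}_t$, which holds precisely because $\alpha$ is independent of $(w,N)$ but deserves an explicit statement. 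What the paper's route buys is uniformity: by absorbing the chain into the noise, the pair is handled symmetrically and all measurability claims can be cited from the standard L\'evy-SDE literature; what your route buys is that it works without realizing the chain as a stochastic integral, i.e., without the representation \eqref{mcequation} or an enlargement of the probability space carrying $\Pi$.
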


To proceed, we need a property of continuous-time Markov chain and a classical Lemma.
We first note that the homogeneous continuous-time Markov chain $\alpha_t$ with generator $Q=(q_{ij})_{N\times N}$ can be represented by a stochastic integral with respect to a Poisson random measure (see \cite{mao2006stochastic}, \cite{skorokhod2009asymptotic} and \cite{yin2010stability}). For $i, j \in \mathcal{S}$  with  $j \neq i$, let  $\Delta_{i j}$  be the consecutive, left closed and right open intervals of the real line, each having length  $q_{i j}$ such that
\begin{align*}
	&\Delta_{12} = [0,q_{12}), \: \Delta_{13} = [q_{12},q_{12}+q_{13}), \: 
	\\
	&... \:  
	\\
	& \Delta_{1N}=\left[\sum_{j=2}^{N-1}q_{1j},\sum_{j=2}^{N}q_{1j}\right),
	\\
	&\Delta_{21}=\left[\sum_{j=2}^{N}q_{1j},\sum_{j=2}^{N}q_{1j}+q_{21}\right), \: \Delta_{23}=\left[\sum_{j=2}^{N}q_{1j}+q_{21},\sum_{j=2}^{N}q_{1j}+q_{21} + q_{23}\right),
	\\
	&...
	\\
	&\Delta_{2N}=\left[\sum_{j=2}^{N}q_{1j}+\sum_{j=1,j\neq2}^{N-1}q_{2j},\sum_{j=2}^{N}q_{1j}+\sum_{j=1,j\neq2}^{N}q_{2j}\right)
\end{align*} and so forth.
Define a function  $h: \mathcal{S} \times \mathbb{R} \mapsto \mathbb{R} $ by
\begin{equation*}
	h(i , z)=\sum_{j \in \mathcal{S}}(j-i) I_{\left\{z \in \Delta_{i j}\right\}} .
\end{equation*}
This says that for each  $i \in \mathcal{S}$ , if $ z \in \Delta_{i j}$ , then $h( i, z)=j-i $; otherwise  $h( i, z)=0 $. Then as in Chapter 1 in \cite{mao2006stochastic},
\begin{equation}
d \alpha_t=\int_{\mathbb{R}} h\left(\alpha_{t-}, z\right) \Pi(d t, d z),
\label{mcequation}
\end{equation}
where  $\Pi(d t, d z) $ is a Poisson random measure with intensity  $d t \times m(d z)$ , and  $m(\cdot) $ is the Lebesgue measure on $ \mathbb{R}$ . The Poisson random measure  $\Pi(\cdot, \cdot)$ here is independent of the Wiener process $w$ and Poisson random measure  $N(\cdot, \cdot)$ in equation \eqref{slevysde}.
Before proceeding to the proof of Theorem \ref{markovproperty}, it is necessary to introduce a classical lemma; see Lemma 1.1.9 in \cite{applebaum2009levy}.
\begin{lem}
	Let $(\Omega,\mathcal{F},\mathbb{P})$ be a probability space and $\mathcal{G}$ be a sub-$\sigma$-algebra of $\mathcal{F}$. If $X$ and $Y$ are random variables such that $X$ is $\mathcal{G}$-measurable and $Y$ is independent with $\mathcal{G}$. Then 
	\begin{equation*}
		\mathbb{E}\left(f(X,Y)|\mathcal{G}\right) = K_f(X), 
	\end{equation*} for any bounded Borel measurable function $f(x,y)$, where $K_f(x) = \mathbb{E}\left(f(x,Y)\right)$ for each $x\in \mathbb{R}^d$. 
\label{mclem}
\end{lem}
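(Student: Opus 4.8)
The plan is to verify directly that the candidate random variable $K_f(X)$ satisfies the two defining properties of the conditional expectation $\mathbb{E}(f(X,Y)\mid\mathcal{G})$: that it is $\mathcal{G}$-measurable, and that it reproduces the correct averages on $\mathcal{G}$, i.e. for every $G\in\mathcal{G}$,
\begin{equation*}
\mathbb{E}\bigl(f(X,Y)\,\mathbf{1}_G\bigr)=\mathbb{E}\bigl(K_f(X)\,\mathbf{1}_G\bigr).
\end{equation*}
The essential structural input is that $Y$ is independent of $\mathcal{G}$, hence independent of the $\mathcal{G}$-measurable pair $(X,\mathbf{1}_G)$; almost-sure uniqueness of conditional expectation then yields the claim.

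First I would check that $x\mapsto K_f(x)=\mathbb{E}\bigl(f(x,Y)\bigr)=\int f(x,y)\,\mathbb{P}_Y(dy)$ is well defined and Borel measurable, where $\mathbb{P}_Y$ denotes the law of $Y$. Since $f$ is bounded and jointly Borel, this is precisely the assertion that the partial integral of a bounded jointly measurable function against $\mathbb{P}_Y$ is measurable in the free variable, which is part of Fubini's theorem. Consequently $K_f(X)$ is a genuine random variable, and since $X$ is $\mathcal{G}$-measurable and $K_f$ is Borel, $K_f(X)$ is $\mathcal{G}$-measurable.

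For the averaging identity I would exploit independence through the product structure of laws. Fix $G\in\mathcal{G}$. The vector $(X,\mathbf{1}_G)$ is $\mathcal{G}$-measurable and $Y$ is independent of $\mathcal{G}$, so the joint law of $\bigl((X,\mathbf{1}_G),Y\bigr)$ factorizes as the product of the law of $(X,\mathbf{1}_G)$ with $\mathbb{P}_Y$. Applying Fubini's theorem (using boundedness of $f$ for integrability) to the bounded measurable map $\bigl((x,u),y\bigr)\mapsto u\,f(x,y)$ and integrating out $y$ first gives
\begin{equation*}
\mathbb{E}\bigl(\mathbf{1}_G\,f(X,Y)\bigr)=\mathbb{E}\!\left(\mathbf{1}_G\int f(X,y)\,\mathbb{P}_Y(dy)\right)=\mathbb{E}\bigl(\mathbf{1}_G\,K_f(X)\bigr),
\end{equation*}
which is exactly the required identity. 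Combined with the measurability established above and the uniqueness of conditional expectation, this proves $\mathbb{E}(f(X,Y)\mid\mathcal{G})=K_f(X)$.

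An equivalent, more self-contained route is the functional monotone class theorem: one first verifies the identity on the multiplicative class $f(x,y)=g(x)h(y)$ with $g,h$ bounded Borel, where $\mathbb{E}(g(X)h(Y)\mid\mathcal{G})=g(X)\,\mathbb{E}(h(Y))=K_f(X)$ follows immediately by taking out the $\mathcal{G}$-measurable factor $g(X)$ and from $\mathbb{E}(h(Y)\mid\mathcal{G})=\mathbb{E}(h(Y))$, and then extends to all bounded Borel $f$ by linearity and bounded monotone limits via conditional monotone convergence. In either approach the only genuinely delicate point is the measurability of $K_f$ together with the justification of integrating $Y$ out first; both reduce to Fubini's theorem for the product of the law of the $\mathcal{G}$-measurable data with the law of $Y$, and this constitutes the main, albeit standard, obstacle.
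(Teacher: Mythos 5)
Your proof is correct. Note that the paper itself does not prove this lemma at all: it is quoted as a classical result, with a pointer to Lemma 1.1.9 in \cite{applebaum2009levy}, so there is no in-paper argument to compare against. Your verification is the standard textbook proof of that classical fact, and both of your routes are sound: the direct one (check $\mathcal{G}$-measurability of $K_f(X)$, where Borel measurability of $x \mapsto K_f(x)$ comes from Fubini applied to the bounded jointly measurable $f$ against $\mathbb{P}_Y$, then verify the partial-averaging identity $\mathbb{E}(\mathbf{1}_G f(X,Y)) = \mathbb{E}(\mathbf{1}_G K_f(X))$ using that independence of $Y$ from $\mathcal{G}$ factorizes the joint law of $((X,\mathbf{1}_G),Y)$), and the functional monotone class alternative starting from $f(x,y)=g(x)h(y)$. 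You also correctly isolated the one genuinely delicate point, namely the measurability of $K_f$ and the justification for integrating out $y$ first, both of which you reduce to Fubini on the product of the law of the $\mathcal{G}$-measurable data with the law of $Y$. One small refinement worth recording for the vector-valued setting of the paper: the lemma is applied there to the two-component process taking values in $\mathbb{R}^d \times \mathcal{S}$, but since $\mathcal{S}$ is finite this changes nothing in your argument beyond replacing $\mathbb{R}^d$ by a product standard Borel space.
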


Now we give a proof of Theorem \ref{markovproperty}. 
\begin{proof}[Proof of Theorem  \ref{markovproperty} ]
	First define $\mathcal{G}_s = \sigma\{w_{s+u}-w_{s}, N(s+u,A)-N(s,A), \Pi(s+u,B)-\Pi(s,B) : u\geq0, A \in\mathcal{B}(\mathbb{R}^d-0), B \in \mathcal{B}(\mathbb{R})\}$. Then $\mathcal{G}_s$ is independent of $\mathcal{F}_s$, since $w_t$, $N(t,A)$ and $\Pi(t,B)$ are L\'{e}vy processes.
	Let $\Phi_{0,t}(x,i)$ be a solution of equation \eqref{slevysde} with initial value $X_0 = x  $ and $\alpha_0 = i $ a.s. and $\Lambda_{0,t}(i)$ be a solution of equation \eqref{mcequation} with $\alpha_0 = i $ a.s.:
	\begin{align*}
			\Phi_{0,t}(x,i)	 =& \: x + \int_{0}^{t} f(\Phi_{0,u}(x,i),\Lambda_{0,u}(i))du + \int_{0}^{t} \sigma(\Phi_{0,u}(x,i),\Lambda_{0,u}(i))dw_u
			\\
			&+ \int_{0}^{t}\int_{|y|<c}F(\Phi_{0,u-}(x,i),y,\Lambda_{0,u-}(i))\cpoim{u} 
		\\
		&+  \int_{0}^{t}\int_{|y|\geq c}G(\Phi_{0,u-}(x,i),y,\Lambda_{0,u-}(i))\poim{u},
		\\
		\Lambda_{0,t} (i)=&\: i + \int_{0}^{t} \int_{\mathbb{R}} h\left(\Lambda_{0,u-}(i), z\right) \Pi(d u, d z).
	\end{align*}
	Also we have
	\begin{align*}
		\Phi_{s,t}(\Phi_{0,s}(x,i),\Lambda_{0,s} (i)) =& \: \Phi_{0,s}(x,i) + \int_{s}^{t} f(\Phi_{s,u}(\Phi_{0,s}(x,i),\Lambda_{0,s} (i)),\Lambda_{s,u}(\Lambda_{0,s} (i)))du 
		\\
		&+ \int_{s}^{t} \sigma(\Phi_{s,u}(\Phi_{0,s}(x,i),\Lambda_{0,s} (i)),\Lambda_{s,u}(\Lambda_{0,s} (i)))dw_u
		\\
		&+ \int_{s}^{t}\int_{|y|<c}F(\Phi_{s,u-}(\Phi_{0,s}(x,i),\Lambda_{0,s} (i)),y,\Lambda_{s,u-}(\Lambda_{0,s} (i)))\cpoim{u} 
		\\
		&+  \int_{s}^{t}\int_{|y|\geq c}G(\Phi_{s,u-}(\Phi_{0,s}(x,i),\Lambda_{0,s} (i)),y,\Lambda_{s,u-}(\Lambda_{0,s} (i)))\poim{u},
		\\
		\Lambda_{s,t} (\Lambda_{0,s} (i))=&\: i + \int_{s}^{t} \int_{\mathbb{R}} h\left(\Lambda_{s,u-}(\Lambda_{0,s} (i)), z\right) \Pi(d u, d z).
	\end{align*}
	Since the two component process $\left(\Phi_{s,t}(y,j),\Lambda_{s,t} (j)\right)$ is determined by the increments of type $w_{s+u}-w_{s}, N(s+u,A)-N(s,A), \Pi(s+u,B)-\Pi(s,B)$ for $u\geq0$, see, for example, Lemma 4.3.12 in \cite{applebaum2009levy}, so it is $\mathcal{G}_t$-measurable. Therefore the two component process $\left(\Phi_{s,t}(\Phi_{0,s}(x,i),\Lambda_{0,s} (i)),\Lambda_{s,t} (\Lambda_{0,s} (i))\right)$ is independent of $\mathcal{F}_s$. Meanwhile we have $\left(\Phi_{s,t}(\Phi_{0,s}(x,i),\Lambda_{0,s} (i)),\Lambda_{s,t} (\Lambda_{0,s} (i))\right)=\left(\Phi_{0,t}(x,i),\Lambda_{0,t} (i)\right)$ by the uniqueness of the solution. Hence, by Lemma \ref{mclem}, we have for $f$ is a bounded borel measurable function, 
	\begin{align*}
		\E \left(f(\left(\Phi_{0,t}(x,i),\Lambda_{0,t} (i)\right)|\mathcal{F}_s)\right) &= \E \left(f(\left(\Phi_{s,t}(\Phi_{0,s}(x,i),\Lambda_{0,s} (i)),\Lambda_{s,t} (\Lambda_{0,s} (i))\right))|\mathcal{F}_s)\right) 
		\\
		&=  \E \left(f(\left(\Phi_{s,t}(y,j),\Lambda_{s,t} (j)\right))|\mathcal{F}_s)\right) |_{y=\Phi_{0,s}(x,i),j=\Lambda_{0,s} (i)}.
	\end{align*}
The same argument shows that 
\begin{align*}
\E \left(f(\left(\Phi_{0,t}(x,i),\Lambda_{0,t} (i)\right)|\left(\Phi_{0,s}(x,i),\Lambda_{0,s} (i)\right)\right)
=  \E \left(f(\left(\Phi_{s,t}(y,j),\Lambda_{s,t} (j)\right))|\mathcal{F}_s)\right) |_{y=\Phi_{0,s}(x,i),j=\Lambda_{0,s} (i)}. 
\end{align*}
Therefore we obtain the Markov property of two-component process $\left(\Phi_{0,t}(x,i),\Lambda_{0,t} (i)\right)$.
\end{proof}

\bigskip

\subsection*{Acknowledgement}
This work was partially supported by WISE program (MEXT) at Kyushu University (YC), and by JST CREST Grant Number JPMJCR2115, Japan, and by JSPS KAKENHI Grant Number 22H01139 (HM).

\bigskip

\bibliographystyle{apalike} 
\bibliography{base_switch}

\end{document}